\definecolor{dark-gray}{rgb}{.35,.55,.55}
\definecolor{dark-blue}{rgb}{.0,.0,.6}
\newcommand{\mean}[1]{\langle #1 \rangle}
\newcommand{\GHZ}{GHZ}
\newcommand{\diag}{{\mathrm{diag}}}
\newcommand{\id}{\mathds{1}}
\newcommand{\esf}{\hat{F}}
\renewcommand{\tr}{\mathrm{tr}}
\newtheorem{theorem}{Theorem}
\newtheorem{corollary}[theorem]{Corollary}
\newtheorem{lemma}[theorem]{Lemma}
\def\maketitle{
\@author@finish
\title@column\titleblock@produce
\suppressfloats[t]}
\begin{document}

\title{Certifying the Topology of Quantum Networks: Theory and Experiment}

\author{Lisa T. Weinbrenner}
\affiliation{Naturwissenschaftlich-Technische Fakult\"at, Universit\"at Siegen, Walter-Flex-Stra\ss{}e 3, 57068 Siegen, Germany}

\author{Nidhin Prasannan}
\affiliation{Paderborn University, Integrated Quantum Optics,\looseness=-1{ }Institute for Photonic Quantum Systems (PhoQS), Warburger Stra\ss{}e 100, 33098 Paderborn, Germany}

\author{Kiara Hansenne}
\affiliation{Naturwissenschaftlich-Technische Fakult\"at, Universit\"at Siegen, Walter-Flex-Stra\ss{}e 3, 57068 Siegen, Germany}

\author{Sophia Denker}
\affiliation{Naturwissenschaftlich-Technische Fakult\"at, Universit\"at Siegen, Walter-Flex-Stra\ss{}e 3, 57068 Siegen, Germany}

\author{Jan Sperling}
\affiliation{Paderborn University, Theoretical Quantum Science,\looseness=-1{ }Institute for Photonic Quantum Systems (PhoQS), Warburger Stra\ss{}e 100, 33098 Paderborn, Germany}

\author{Benjamin Brecht}
\affiliation{Paderborn University, Integrated Quantum Optics,\looseness=-1{ }Institute for Photonic Quantum Systems (PhoQS), Warburger Stra\ss{}e 100, 33098 Paderborn, Germany}

\author{Christine Silberhorn}
\affiliation{Paderborn University, Integrated Quantum Optics,\looseness=-1{ }Institute for Photonic Quantum Systems (PhoQS), Warburger Stra\ss{}e 100, 33098 Paderborn, Germany}

\author{Otfried G\"uhne}
\affiliation{Naturwissenschaftlich-Technische Fakult\"at, Universit\"at Siegen, Walter-Flex-Stra\ss{}e 3, 57068 Siegen, Germany}

\date{\today}

\begin{abstract}
    Distributed quantum information in networks is paramount for global secure quantum communication. 
    Moreover, it finds applications as a resource for relevant tasks, such as clock synchronization, magnetic field sensing, and blind quantum computation.
    For quantum network analysis and benchmarking of implementations, however, it is crucial to characterize the topology of networks in a way that reveals the nodes between which entanglement can be reliably distributed.
    Here, we demonstrate an efficient scheme for this topology certification.
    Our scheme allows for distinguishing, in a scalable manner, different networks consisting of bipartite and multipartite entanglement sources. It can be 
    applied to semi-device independent scenarios also, where the measurement 
    devices and network nodes are not well characterized and trusted. We 
    experimentally demonstrate our approach by certifying the topology of 
    different six-qubit networks generated with polarized photons, employing 
    active feed-forward and time multiplexing. 
    Our methods can be used for general simultaneous tests of multiple hypotheses with few measurements, being useful for other certification scenarios in quantum technologies. 
\end{abstract}

\maketitle

{\it Introduction.---}
    A key hallmark of modern information technology is the ability to establish multi-user communication channels.
    In the field of quantum information processing, such multilateral communication channels gain further significance as they can be enhanced by providing entanglement as a quantum resource between multiple nodes of a quantum network \cite{Kimble2008, Wehner2018, Azuma2022}.
    Indeed, such quantum networks can serve as useful structures for secure communication \cite{Murta2020}, clock synchronization \cite{Komar2014}, distributed field sensing \cite{Sekatski2020, Proctor2018}, and even blind quantum computation \cite{Barz2012}.
    Consequently, many experimental groups pursue their implementation by demonstrating basic network 
    structures \cite{Hermans2022,Liu2023, Nitsche2020}. 
    In any case, real quantum networks are fragile and stochastic effects caused by, e.g., probabilistic entanglement generation or failure of nodes and links \cite{Collins2007,Shchukin2019,Weinbrenner2023,Bugalho2023}, detrimentally affect the usefulness and connectivity of a network.
    Similarly, eavesdropping events as well as corrupted nodes may affect the network structure, necessitating a probing and monitoring of the shared quantum resources in an easily accessible manner.
    
\begin{figure}[t]
    \centering
    \includegraphics[clip, trim=1.9cm 7.7cm 2.1cm 10.9cm,width=0.91\columnwidth]{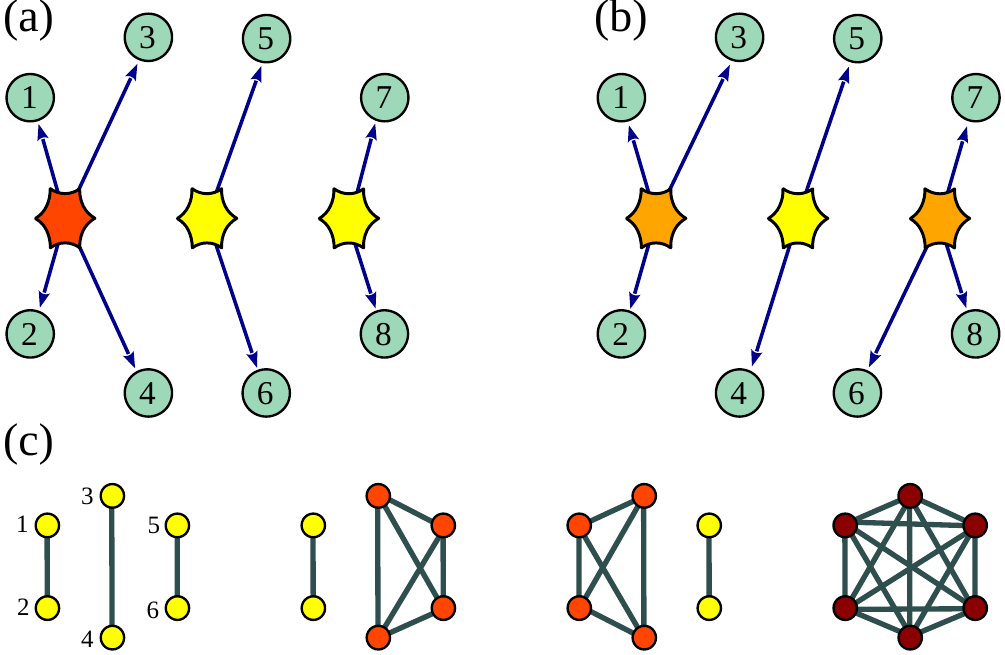}
    \caption{%
        (a,b) For a quantum network of eight parties, two possible network configurations are shown:
        (a) two two-qubit sources and one four-qubit source and (b) one two-qubit source and two three-qubit sources are used to distribute eight qubits. 
        (c) For the experimental implementation, we consider a six-qubit network, where four different configurations shall be distinguished.
        Here, the sources distribute two-, four- or six-qubit GHZ states, which are depicted by the fully connected graphs.
    }\label{fig:problemconf}
\end{figure}

    In any case, the characterization of the distributed entanglement across the network is indispensable.
    So far, however, the tools for this 
    have been limited mainly to analyzing which quantum states
    can and cannot be established in a given network structure \cite{navascues2020genuine, kraft2021quantum, luo2021new, Tavakoli2022, Hansenne2022, Pappa2012, Mccutcheon2016}.
    In order to understand the properties and limitations of a given quantum network, however, it is crucial to certify its topology.
    This refers to the probing of a set of targeted quantum network configurations (see Fig.~\ref{fig:problemconf}) and goes beyond the characterization of single distributed quantum states \cite{Pappa2012,Murta2023,Kao2023}.
    First approaches to this problem have recently been given \cite{Chen2023, Yang2022}, but 
    they assume the distribution of pure states or specific noise models and do not allow for certifying the quantum nature of the distributed states. 

    We explore in theory and experiment the resource-efficient hypothesis testing of distinct quantum network configurations.
    For this purpose, we devise and implement a protocol that allows us to statistically certify (or falsify) which hypothesis is consistent with the multipartite quantum state of a network.
    This is significantly different from standard hypothesis testing where only one hypothesis is compared with one null hypothesis \cite{Flammia2011,Yu2022,Saggio2019,Pallister2018,Martinez2021}.
    Importantly, our tests are based on a common set of  local measurements and are easily implementable.
    In the experiment, we generate six-qubit quantum networks with different multipartite entanglement structures from a flexible, engineered source based on time-multiplexing and feed-forward.
    Our measurements then allow us to determine the generated entanglement configuration with high confidence.
    Finally, we present methods for the topology certification of networks which can also be applied if some nodes are not trusted or some measurement devices are not certified.

{\it Statement of the problem.---}
    To start with an example, consider eight parties connected through a quantum network, where entanglement is distributed.
    In Fig.~\ref{fig:problemconf} two different ways for this are depicted as an example:
    In network (a), the entanglement is distributed by two Bell pair sources together with one four-qubit source, whereas network (b) consists of one Bell pair and two three-qubit sources.
    In this proof of principle we consider only two possible topologies for the sake of conciseness and clarity. Nonetheless, our method extends to an arbitrary number of topologies, including topologies with sources made up of a larger number of qubits.
    Here and in the following, we want to certify that the sources distribute (potentially noisy) Greenberger-Horne-Zeilinger (GHZ) states, 
    \begin{equation}
        \ket{\GHZ_n} = \frac{1}{\sqrt{2}} ( \ket{0}^{\otimes n} + \ket{1}^{\otimes n} ),
    \end{equation}
    consisting of $n$ qubits.
    For two qubits, GHZ states are the maximally entangled Bell states.
    For more particles, they are also, in some sense, maximally entangled \cite{Scarani2001}, 
    and form valuable resources for multiparticle cryptography \cite{Murta2020,Christandl2005} and quantum metrology \cite{Toth2014,Giovannetti2004}.
    The key questions are now:
    How can the eight parties identify, in a simple manner, which of the different configurations the network qubits are currently sharing?
    How can they find out which types of sources have been used (e.g., how many qubits were entangled) and between which of their qubits entanglement has successfully been generated?

    These questions, with appropriate modifications, arise in several situations.
    For instance, it may be that the parties are connected via an intricate network of qubits, where the network provider promises to generate maximally entangled states in different configurations.
    In this case, the parties may be interested in verifying the provider's claims with minimum effort.
    Alternatively, consider a network with some dishonest participants.
    Then, some other participants may want to certify that they share an entangled state, while ensuring that this state is not shared with any potentially malicious party.

    In the general case, the problem can be considered for $N$ nodes, 
    corresponding to $N$ qubits from different sources.
    The aim is then to certify the topology of the network from which the qubits originate.
    Here, one may additionally assume that only GHZ states of maximally $M < N$  qubits can be generated by the sources, effectively reducing the set of possible configurations.
    In the following, we present an efficient scheme to measure all fidelities 
    $
        F_I = \tr(\ket{\GHZ_I}\bra{\GHZ_I} \varrho_I)
    $
    for all possible configurations in a unified manner.
    The index $I$ denotes here the set of $|I|=n$ qubits on which the fidelity depends, and the state $\varrho_I$ is the reduced state on the qubits labeled by $I$;
    in Fig.~\ref{fig:problemconf}(a), the set $I$ could be $\{1,2,3,4\}$, $\{5,6\}$ or $\{7,8\}$. 
    This allows us to derive statistically rigorous tests for the different hypotheses about the topology directly from the measurement data. 
    We stress that our approach is fundamentally different from the task of state discrimination for a set of states \cite{Barnett2009} as we are not assuming that the quantum state comes from a fixed collection of states.
    In addition, we do not make any assumptions about the kind of noise.

{\it Simultaneous fidelity estimation and hypothesis testing.---}
    To start, we recall how the fidelity of an $N$-qubit GHZ state can be determined \cite{Guehne2007}. 
    This state can be decomposed into a diagonal term $\mathcal{D}_N$ and an anti-diagonal term $\mathcal{A}_N$, i.e.,
    \begin{align}
        \ketbra{GHZ_N}  = & \frac{1}{2} 
        \big(\ketbra{0}^{\otimes N} +  \ketbra{1}^{\otimes N}
        + \ket{0}\!\!\bra{1}^{\otimes N} \nonumber \\
        & + \ket{1}\!\!\bra{0}^{\otimes N}\big) = \frac{1}{2}(\mathcal{D}_N + \mathcal{A}_N ).
    \end{align}
    The diagonal term can be determined by performing the Pauli measurement $\sigma_z^{\otimes N}$ on 
    all qubits.
    Concerning the anti-diagonal term, it has been shown that it can be written as $\mathcal{A}_N = \nicefrac{1}{N} \sum_{k=0}^{N-1} (-1)^k \mathcal{M}_k^{\otimes N}$, where the observables $\mathcal{M}_k$ are given by measurements in the $x$-$y$ plane of the Bloch sphere, 
    \begin{equation}
        \mathcal{M}_k = 
        \Big[ 
            \cos \left( \frac{k \pi}{N}\right) \sigma_x + \sin\left( \frac{k \pi}{N} \right) \sigma_y
        \Big].
        \label{eq:M_k}
    \end{equation}
    This means that the fidelity of an $N$-qubit GHZ state can be determined by in total $N+1$ local measurements. 
    Note that the measurements $\mathcal{M}_k$ also depend on the number $N$ of qubits.

    The key observation is that the decomposition of $\mathcal{A}_N$ is not unique.
    Indeed, other sets of measurements in the $x$-$y$ plane of the Bloch sphere also allow us to determine $\mathcal{A}_N$, as long as the measurements form a basis in the space of operators spanned by products of $\sigma_x$ and $\sigma_y$, with an even number of $\sigma_y$ \cite{Guehne2007}.
    This paves the way for the simultaneous estimation of several GHZ fidelities:
    From the measurement data of $\mathcal{M}_k^{\otimes N}$, $\mathcal{A}_N$ can be determined using the formula above.
    Furthermore, for any subset of $m<N$ qubits, the expectation values $\mean{\mathcal{M}_k^{\otimes m}}$ can be obtained from the same set of data, which allows for the computation of the fidelity of the $m$-qubit GHZ states with respect to the reduced state on these $m$ particles.
    Explicit formulas for the $m$-qubit fidelities are provided in Appendix~A.

    This leads to the following scheme for testing the topology of a network: 
    For a given $N$, the parties perform the $N+1$ local measurements $\sigma_z^{\otimes N}$ and $\mathcal{M}_k^{\otimes N}$.
    They then use this data to determine the set of fidelities $\{F_I\}$ for each considered network configuration.
    This allows them to identify the actual configuration and, at the same time, to characterize the quality of the sources.
    If the parties know that the sources are at most $M$-partite, then it suffices to perform the $M$ measurements $\mathcal{M}_k^{\otimes N}$ with angles $\nicefrac{k \pi}{M}$ in the $x$-$y$ plane and the measurement $\sigma_z^{\otimes N}$.

    It remains to discuss how to formulate a proper hypothesis test in the space of potential fidelities that can be used to make a decision based on the observed data.
    Here, the task is to formulate a set of mutually exclusive hypotheses which correspond to the different considered topologies, are physically motivated, and, at the same time, allow for a direct estimation of a $p$-value.
    The $p$-value of a hypothesis describes the probability of observing the experimental data given that the hypothesis $H$ holds true, i.e., $p= \Pr[\mathrm{data} \mid H ]$. 
    From a physical point of view, it is important to certify that a working source delivers GHZ states with a fidelity $F>1/2$, as this guarantees the presence of genuine multiparticle entanglement \cite{Guehne2009,Horodecki2009}.
    Moreover, there are intricate dependencies between the fidelities of different GHZ states.
    If a state on $n$ qubits has a high GHZ fidelity, then the reduced state on a subset of $m<n$ qubits has also a non-vanishing fidelity with a GHZ state (with potentially adjusted phases);
    indeed, we have $F_m > F_n/2$ because of the common entries on the diagonal.

    The above considerations motivate the following strategy to formulate exclusive hypotheses in the space of all fidelities. 
    Any topology $T$ is characterized by a set of fidelities $\{ F^T_I \}$ of the included GHZ states.
    The hypothesis corresponding to $T$ is then given by a set of conditions of the type
    \begin{equation}\label{eq:hypothesis}
        F_I^T - \max_{ G \supset I } \Big\{ F_G^T \Big\} > \frac{1}{2},  
    \end{equation}
    where ${G \supset I}$ denotes the relevant supersets of the qubits in the $n$-qubit set ${I}$.
    For instance, in order to distinguish only the distinct topologies (a) and (b) in Fig.~\ref{fig:problemconf}, the hypothesis for configuration (a) should contain the conditions $F_{\{ 1,2,3,4\}} > 1/2$, $F_{\{ 5,6\}} > 1/2$ and $F_{\{7,8\}} - F_{\{6,7,8\}} > 1/2$, and the hypothesis for (b) the conditions $F_{\{1,2,3\}}-F_{\{1,2,3,4\}}>1/2$, $F_{\{4,5\}}>1/2$ and $F_{\{6,7,8\}}>1/2$, rendering these hypotheses mutually exclusive. 
    Taking the differences of the fidelities, e.g., $F_{\{7,8\}} - F_{\{6,7,8\}} > 1/2$, is necessary to distinguish between tripartite entanglement on $\{6,7,8\}$, which leads to high fidelities $F_{\{7,8\}}$ \textit{and} $F_{\{6,7,8\}}$, and bipartite entanglement on $\{7,8\}$, which only results in a high fidelity $F_{\{7,8\}}$.
    Finally, one always has to consider the null hypothesis, where none of the considered topologies can be certified, e.g., when states different from GHZ states have been prepared.
    Given such hypotheses, an upper bound on the $p$-values can directly be calculated from the data as in Refs.~\cite{Flammia2011,Moroder2013}, using large deviation bounds like the Hoeffding inequality \cite{Hoeffding1963}.
    Details can be found in Appendices B and C.

\begin{figure*}[t]
    \centering
    \includegraphics{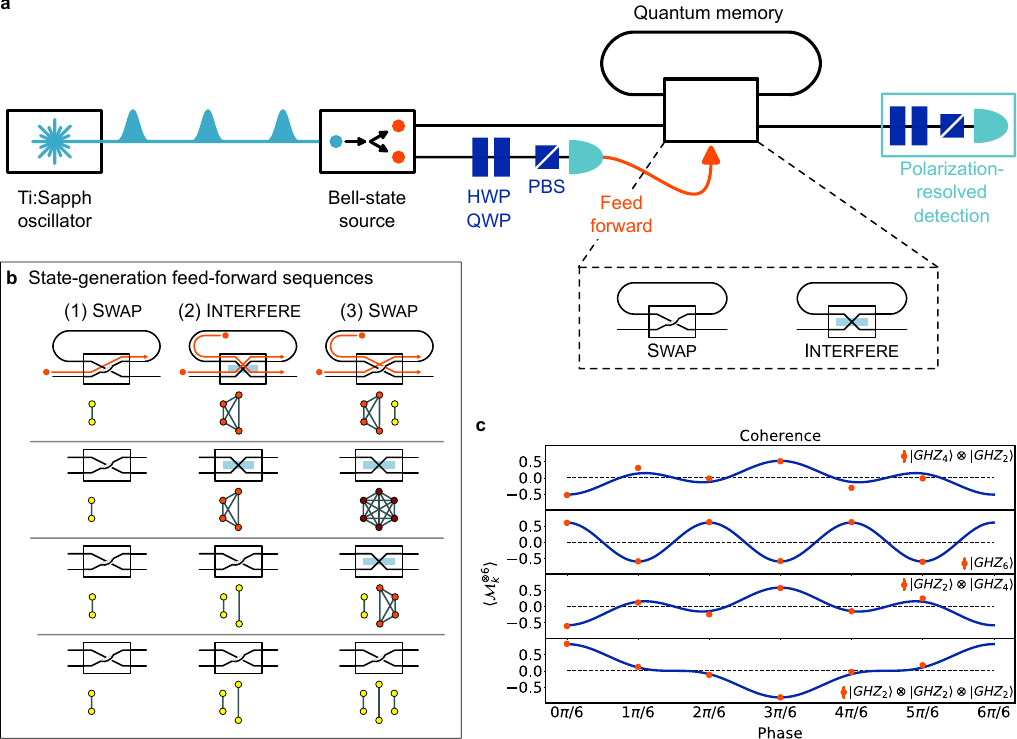}
    \caption{%
        Operation principle of the experiment. 
        (a) A dispersion-engineered, integrated Bell-state source probabilistically generates polarization-entangled Bell states.
        A successful polarization-resolved detection of one pair-photon generates a feed-forward signal to the quantum memory that stores its sibling.
        Retrieved light from the memory is detected in another polarization-resolved detection stage.
        The zoom-out shows the two operation modes of the memory:
        \textsc{swap} stores a new photon while releasing a stored photon without them interacting;
        \textsc{interfere} realizes a balanced interference between a new photon and a stored photon to increase the size of the entangled state.
        HWP: half-wave plate; QWP: quarter-wave plate; PBS: polarizing beam splitter.
        (b) Depending on the desired final network topology, different feed-forward sequences are implemented, which exchange \textsc{swap} and \textsc{interfere} operations of the quantum memory.
        Lines 2-4 show a smaller version of the quantum memory pictogram.
        (c) Measured average coherence terms $\langle\mathcal{M}_k^{\otimes 6}\rangle$ for the four different network topologies (orange markers), in the same order as in (b).
        The terms show an oscillatory dependence on the phase, which correlates with the number of entangled photons in each state.
        The blue lines are theory and serve as guide to the eye.
        Error bars are smaller than the symbol size. 
    }\label{fig:setup}
\end{figure*}

{\it Experimental demonstration and application of the hypothesis tests.---}
    For the experimental implementation we first generate polarization-entangled Bell states with a dispersion-engineered parametric down-conversion source in a periodically poled potassium titanyl waveguide \cite{meyer2018}, see also Fig.~\ref{fig:setup} (a).
    Larger entangled states are created with the help of a polarization qubit memory, based on an all-optical storage loop, which doubles as time-multiplexing device to increase generation rates and beam splitter to interfere successive Bell pairs \cite{Meyer2022}.
    The qubit memory's operation mode---\textsc{swap} or \textsc{interfere}---is triggered by fast feed-forward based on the detection of one qubit from each Bell pair.
    We can generate four- and six-photon GHZ states with our setup at increased rates.
    Here, we make use of the specific programming capabilities of our system to generate the four different six-photon states depicted in Fig.~\ref{fig:problemconf}(c) 
    by changing the feed-forward sequence of the memory, without any physical changes in the experimental setup in practice;
    see Fig.~\ref{fig:setup} (b). 

    The first line schematically depicts the feed-forward sequence for the generation of a $|GHZ_4\rangle\otimes|GHZ_2\rangle$ network topology.
    Upon detection of a photon from the first Bell pair, its partner is stored in the memory by means of a \textsc{swap} operation.
    It is then interfered with a photon from the successive Bell pair to generate a $|GHZ_4\rangle$ state by means of the \textsc{interfere} operation.
    The stored photon is then exchanged for a photon from the third Bell pair via another \textsc{swap} operation.
    Note that we did not depict a final \textsc{swap} operation, which serves to read out the final photon from the memory.
    The \textsc{interfere} operation is essentially a fusion of two polarization qubits from two Bell pairs by interfering them on a polarizing beam splitter and post-selecting on a specific measurement pattern to create a graph state \cite{bouwmeester99,Lu2007}. 
    
    Two consecutive \textsc{interfere} operations generate a $|GHZ_6\rangle$ state, while a \textsc{swap} operation followed by an \textsc{interfere} operation generates the state $|GHZ_2\rangle \otimes |GHZ_4\rangle$.
    Finally, two \textsc{swap} operations yield the state $|GHZ_2\rangle \otimes |GHZ_2\rangle \otimes |GHZ_2\rangle$, where photons from each Bell pair only share entanglement with each other. 
    Note that in our setup fixing the phase of the six-photon GHZ state as $\ket{GHZ_6} = (\ket{0}^{\otimes 6} + \ket{1}^{\otimes 6})/{\sqrt{2}}$ implicitly fixes the phase of the four-photon state to 
    $\ket{GHZ_4^-} = (\ket{0}^{\otimes 4} - \ket{1}^{\otimes 4})/{\sqrt{2}}$, so we formulate the hypotheses for
    this four-photon source.

\begin{figure*}[t]
    \centering
    \includegraphics[width=18.0cm]{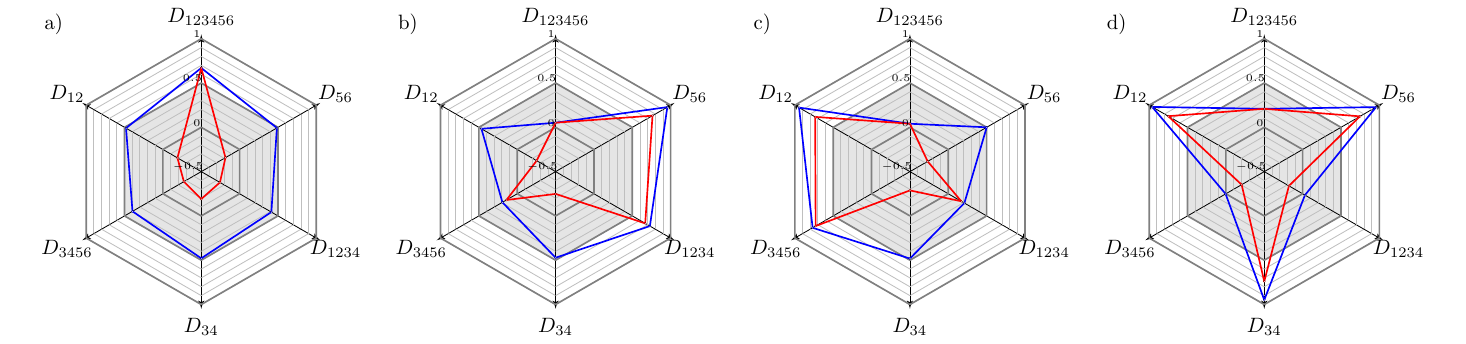}
    \caption{%
        Graphical depiction of the fidelities (blue curves) and the differences of the fidelities considered in the hypotheses (red curves) for the four different measured datasets.
        The different directions $D_I$ denote either the fidelity $F_I$ (blue) or the difference $F_I -\max_{G \supset I} F_G$ (red).  
        The differences allow for a clear separation between the states in the sense that only the desired terms are larger than $1/2$ (outside the dark area).
        The hypothesis test leads to the following results:
        dataset a) belongs to the state $\ket{GHZ_6}$, b) to $\ket{GHZ_4}\otimes \ket{GHZ_2}$, c) to $\ket{GHZ_2}\otimes \ket{GHZ_4}$ and d) to $\ket{GHZ_2}\otimes\ket{GHZ_2}\otimes \ket{GHZ_2}$.
    }\label{fig:radarchart}
\end{figure*}

    Our source generates Bell states with entanglement visibility exceeding 93\%.
    In total, we multiplex up to seven pump pulses to create the three Bell states required for this work.
    This yields a final six-photon event rate of approximately $0.3\,$Hz; more details are given in Appendix~D.
    Note that higher rates can be achieved by multiplexing more pump pulses \cite{Meyer2022}.
    This, however, comes at the cost of a decreased state fidelity.

    As described above, for any network topology we perform the Pauli measurement $\sigma_z^{\otimes 6}$ as well as the measurements of the $\mathcal{M}_k^{\otimes 6}$.
    For the latter, we set the corresponding wave plate angles in front of the detection; see Fig.~\ref{fig:setup}.
    We record around thousand successful events for every measurement setting to ensure good statistics. 
    Our data yields H/V populations of $\mathcal{D}_6 = (74\pm1.7)\%$ and a total coherence value of $\mathcal{A}_6 = (60\pm0.9)\%$, resulting in a total fidelity of $F_6=0.67\pm0.01$ for the $|GHZ_6\rangle$ state \cite{wang16}.
    A plot of all coherence terms for the different topologies is shown in Fig.~\ref{fig:setup}(c).

    We then formulate the hypotheses describing the four different topologies and the additional null hypothesis as described above. 
    In Fig.~\ref{fig:radarchart}, the different measured fidelities for the four different states are depicted together with the differences of the fidelities considered in the hypotheses. 
    For each of the four generated states, we computed the upper bounds on the $p$-values corresponding to the five hypotheses $H_i$ ($i=0,\dots,4$). 
    The upper bounds concerning the null hypothesis are smaller than $9.5 \times 10^{-8}$ for all states, i.e., the probability that the fidelities are too small to certify one of the network states is at most $9.5 \times 10^{-8}$. 
    For the four hypotheses $H_1$, $H_2$, $H_3$ and $H_4$, there exists one hypothesis for each state, for which the $p$-value is trivially upper bounded by one, while the other three $p$-values are upper bounded by at least $1.2 \times 10^{-32}$. Details are given in Appendix~C.

    So far, we assumed that the performed measurements are well calibrated and that the nodes are trusted.
    This, however, may not necessarily be the case.
    The key to discussing the device-independent scenario is to consider the Bell operator from the Mermin inequality \cite{Mermin1990}, which is connected to the GHZ-state fidelity. A detailed discussion of this approach can be found in {Appendix~E}.
    
{\it Discussion.---}
We introduced a method to certify the topology
of $N$-qubit networks with a multiple hypotheses test. 
Notably, only $N+1$ measurements are required, regardless of the number of topologies (i.e., hypotheses) under consideration.
We successfully implemented the method using
different six-photon configurations.
Our work opens an avenue to several new research directions. 
First, our methods can be extended to characterize other network scenarios. For instance, other quantum states
besides GHZ states, such as cluster and graph states \cite{Meignant2019, Hahn2022}, may be distributed and certified. Also, one may include 
the effects of classical communication, imperfect quantum memories, and probabilistic entanglement generation to certify the topology of the 
classical and quantum layer of a network. Second, our approach is an 
example of a multiple hypothesis test, a concept which has natural 
applications to other problems. Examples are the joint estimation 
of several incompatible measurements \cite{Guehne2023}
from simpler ones, or the characterization of different 
quantum gates from the same set of state preparations and 
measurements. Here, also shadow-like techniques based 
on randomized measurements may be a fruitful tool \cite{Huang2020}.


{\it Acknowledgments.---}
We thank Jef Pauwels and Armin Tavakoli for discussions.
This work has been supported by the Deutsche Forschungsgemeinschaft (DFG, German Research Foundation, project numbers 447948357 and 440958198, and the Collaborative Research Center TRR 142 (Project No. 231447078, project C10)), the Sino-German Center for Research Promotion (Project M-0294), and the German Ministry of Education and Research (Project QuKuK, BMBF Grant No. 16KIS1618K). The work was further funded through the Ministerium f\"ur Kultur und Wissenschaft des Landes Nordrhein-Westfalen through the project PhoQC: Photonisches Quantencomputing.
L.T.W., K.H., and S.D. acknowledge support by the House of Young Talents of the University of Siegen.

\vspace{0.5cm}

\bibliography{literature}

\begin{thebibliography}{56}%
\makeatletter
\providecommand \@ifxundefined [1]{%
 \@ifx{#1\undefined}
}%
\providecommand \@ifnum [1]{%
 \ifnum #1\expandafter \@firstoftwo
 \else \expandafter \@secondoftwo
 \fi
}%
\providecommand \@ifx [1]{%
 \ifx #1\expandafter \@firstoftwo
 \else \expandafter \@secondoftwo
 \fi
}%
\providecommand \natexlab [1]{#1}%
\providecommand \enquote  [1]{``#1''}%
\providecommand \bibnamefont  [1]{#1}%
\providecommand \bibfnamefont [1]{#1}%
\providecommand \citenamefont [1]{#1}%
\providecommand \href@noop [0]{\@secondoftwo}%
\providecommand \href [0]{\begingroup \@sanitize@url \@href}%
\providecommand \@href[1]{\@@startlink{#1}\@@href}%
\providecommand \@@href[1]{\endgroup#1\@@endlink}%
\providecommand \@sanitize@url [0]{\catcode `\\12\catcode `\$12\catcode
  `\&12\catcode `\#12\catcode `\^12\catcode `\_12\catcode `\%12\relax}%
\providecommand \@@startlink[1]{}%
\providecommand \@@endlink[0]{}%
\providecommand \url  [0]{\begingroup\@sanitize@url \@url }%
\providecommand \@url [1]{\endgroup\@href {#1}{\urlprefix }}%
\providecommand \urlprefix  [0]{URL }%
\providecommand \Eprint [0]{\href }%
\providecommand \doibase [0]{https://doi.org/}%
\providecommand \selectlanguage [0]{\@gobble}%
\providecommand \bibinfo  [0]{\@secondoftwo}%
\providecommand \bibfield  [0]{\@secondoftwo}%
\providecommand \translation [1]{[#1]}%
\providecommand \BibitemOpen [0]{}%
\providecommand \bibitemStop [0]{}%
\providecommand \bibitemNoStop [0]{.\EOS\space}%
\providecommand \EOS [0]{\spacefactor3000\relax}%
\providecommand \BibitemShut  [1]{\csname bibitem#1\endcsname}%
\let\auto@bib@innerbib\@empty
\bibitem [{\citenamefont {Kimble}(2008)}]{Kimble2008}%
  \BibitemOpen
  \bibfield  {author} {\bibinfo {author} {\bibfnamefont {H.~J.}\ \bibnamefont
  {Kimble}},\ }\bibfield  {title} {\bibinfo {title} {The quantum internet},\
  }\href {https://doi.org/10.1038/nature07127} {\bibfield  {journal} {\bibinfo
  {journal} {Nature}\ }\textbf {\bibinfo {volume} {453}},\ \bibinfo {pages}
  {1023} (\bibinfo {year} {2008})}\BibitemShut {NoStop}%
\bibitem [{\citenamefont {Wehner}\ \emph {et~al.}(2018)\citenamefont {Wehner},
  \citenamefont {Elkouss},\ and\ \citenamefont {Hanson}}]{Wehner2018}%
  \BibitemOpen
  \bibfield  {author} {\bibinfo {author} {\bibfnamefont {S.}~\bibnamefont
  {Wehner}}, \bibinfo {author} {\bibfnamefont {D.}~\bibnamefont {Elkouss}},\
  and\ \bibinfo {author} {\bibfnamefont {R.}~\bibnamefont {Hanson}},\
  }\bibfield  {title} {\bibinfo {title} {Quantum internet: A vision for the
  road ahead},\ }\href {https://doi.org/10.1126/science.aam9288} {\bibfield
  {journal} {\bibinfo  {journal} {Science}\ }\textbf {\bibinfo {volume}
  {362}},\ \bibinfo {pages} {303} (\bibinfo {year} {2018})}\BibitemShut
  {NoStop}%
\bibitem [{\citenamefont {Azuma}\ \emph {et~al.}(2022)\citenamefont {Azuma},
  \citenamefont {Economou}, \citenamefont {Elkouss}, \citenamefont {Hilaire},
  \citenamefont {Jiang}, \citenamefont {Lo},\ and\ \citenamefont
  {Tzitrin}}]{Azuma2022}%
  \BibitemOpen
  \bibfield  {author} {\bibinfo {author} {\bibfnamefont {K.}~\bibnamefont
  {Azuma}}, \bibinfo {author} {\bibfnamefont {S.~E.}\ \bibnamefont {Economou}},
  \bibinfo {author} {\bibfnamefont {D.}~\bibnamefont {Elkouss}}, \bibinfo
  {author} {\bibfnamefont {P.}~\bibnamefont {Hilaire}}, \bibinfo {author}
  {\bibfnamefont {L.}~\bibnamefont {Jiang}}, \bibinfo {author} {\bibfnamefont
  {H.-K.}\ \bibnamefont {Lo}},\ and\ \bibinfo {author} {\bibfnamefont
  {I.}~\bibnamefont {Tzitrin}},\ }\href@noop {} {\bibinfo {title} {Quantum
  repeaters: From quantum networks to the quantum internet}} (\bibinfo {year}
  {2022}),\ \Eprint {https://arxiv.org/abs/2212.10820} {arXiv:2212.10820
  [quant-ph]} \BibitemShut {NoStop}%
\bibitem [{\citenamefont {Murta}\ \emph {et~al.}(2020)\citenamefont {Murta},
  \citenamefont {Grasselli}, \citenamefont {Kampermann},\ and\ \citenamefont
  {Bru{\ss}}}]{Murta2020}%
  \BibitemOpen
  \bibfield  {author} {\bibinfo {author} {\bibfnamefont {G.}~\bibnamefont
  {Murta}}, \bibinfo {author} {\bibfnamefont {F.}~\bibnamefont {Grasselli}},
  \bibinfo {author} {\bibfnamefont {H.}~\bibnamefont {Kampermann}},\ and\
  \bibinfo {author} {\bibfnamefont {D.}~\bibnamefont {Bru{\ss}}},\ }\bibfield
  {title} {\bibinfo {title} {Quantum conference key agreement: A review},\
  }\href {https://doi.org/https://doi.org/10.1002/qute.202000025} {\bibfield
  {journal} {\bibinfo  {journal} {Adv. Quantum Technol.}\ }\textbf {\bibinfo
  {volume} {3}},\ \bibinfo {pages} {2000025} (\bibinfo {year}
  {2020})}\BibitemShut {NoStop}%
\bibitem [{\citenamefont {Komar}\ \emph {et~al.}(2014)\citenamefont {Komar},
  \citenamefont {Kessler}, \citenamefont {Bishof}, \citenamefont {Jiang},
  \citenamefont {S{\o}rensen}, \citenamefont {Ye},\ and\ \citenamefont
  {Lukin}}]{Komar2014}%
  \BibitemOpen
  \bibfield  {author} {\bibinfo {author} {\bibfnamefont {P.}~\bibnamefont
  {Komar}}, \bibinfo {author} {\bibfnamefont {E.~M.}\ \bibnamefont {Kessler}},
  \bibinfo {author} {\bibfnamefont {M.}~\bibnamefont {Bishof}}, \bibinfo
  {author} {\bibfnamefont {L.}~\bibnamefont {Jiang}}, \bibinfo {author}
  {\bibfnamefont {A.~S.}\ \bibnamefont {S{\o}rensen}}, \bibinfo {author}
  {\bibfnamefont {J.}~\bibnamefont {Ye}},\ and\ \bibinfo {author}
  {\bibfnamefont {M.~D.}\ \bibnamefont {Lukin}},\ }\bibfield  {title} {\bibinfo
  {title} {A quantum network of clocks},\ }\href
  {https://doi.org/10.1038/nphys3000} {\bibfield  {journal} {\bibinfo
  {journal} {Nat. Phys.}\ }\textbf {\bibinfo {volume} {10}},\ \bibinfo {pages}
  {582} (\bibinfo {year} {2014})}\BibitemShut {NoStop}%
\bibitem [{\citenamefont {Sekatski}\ \emph {et~al.}(2020)\citenamefont
  {Sekatski}, \citenamefont {W{\"o}lk},\ and\ \citenamefont
  {D{\"u}r}}]{Sekatski2020}%
  \BibitemOpen
  \bibfield  {author} {\bibinfo {author} {\bibfnamefont {P.}~\bibnamefont
  {Sekatski}}, \bibinfo {author} {\bibfnamefont {S.}~\bibnamefont {W{\"o}lk}},\
  and\ \bibinfo {author} {\bibfnamefont {W.}~\bibnamefont {D{\"u}r}},\
  }\bibfield  {title} {\bibinfo {title} {Optimal distributed sensing in noisy
  environments},\ }\href {https://doi.org/10.1103/PhysRevResearch.2.023052}
  {\bibfield  {journal} {\bibinfo  {journal} {Phys. Rev. Res.}\ }\textbf
  {\bibinfo {volume} {2}},\ \bibinfo {pages} {023052} (\bibinfo {year}
  {2020})}\BibitemShut {NoStop}%
\bibitem [{\citenamefont {Proctor}\ \emph {et~al.}(2018)\citenamefont
  {Proctor}, \citenamefont {Knott},\ and\ \citenamefont
  {Dunningham}}]{Proctor2018}%
  \BibitemOpen
  \bibfield  {author} {\bibinfo {author} {\bibfnamefont {T.~J.}\ \bibnamefont
  {Proctor}}, \bibinfo {author} {\bibfnamefont {P.~A.}\ \bibnamefont {Knott}},\
  and\ \bibinfo {author} {\bibfnamefont {J.~A.}\ \bibnamefont {Dunningham}},\
  }\bibfield  {title} {\bibinfo {title} {Multiparameter estimation in networked
  quantum sensors},\ }\href {https://doi.org/10.1103/PhysRevLett.120.080501}
  {\bibfield  {journal} {\bibinfo  {journal} {Phys. Rev. Lett.}\ }\textbf
  {\bibinfo {volume} {120}},\ \bibinfo {pages} {080501} (\bibinfo {year}
  {2018})}\BibitemShut {NoStop}%
\bibitem [{\citenamefont {Barz}\ \emph {et~al.}(2012)\citenamefont {Barz},
  \citenamefont {Kashefi}, \citenamefont {Broadbent}, \citenamefont
  {Fitzsimons}, \citenamefont {Zeilinger},\ and\ \citenamefont
  {Walther}}]{Barz2012}%
  \BibitemOpen
  \bibfield  {author} {\bibinfo {author} {\bibfnamefont {S.}~\bibnamefont
  {Barz}}, \bibinfo {author} {\bibfnamefont {E.}~\bibnamefont {Kashefi}},
  \bibinfo {author} {\bibfnamefont {A.}~\bibnamefont {Broadbent}}, \bibinfo
  {author} {\bibfnamefont {J.~F.}\ \bibnamefont {Fitzsimons}}, \bibinfo
  {author} {\bibfnamefont {A.}~\bibnamefont {Zeilinger}},\ and\ \bibinfo
  {author} {\bibfnamefont {P.}~\bibnamefont {Walther}},\ }\bibfield  {title}
  {\bibinfo {title} {Demonstration of blind quantum computing},\ }\href
  {https://doi.org/10.1126/science.1214707} {\bibfield  {journal} {\bibinfo
  {journal} {Science}\ }\textbf {\bibinfo {volume} {335}},\ \bibinfo {pages}
  {303} (\bibinfo {year} {2012})}\BibitemShut {NoStop}%
\bibitem [{\citenamefont {{Hermans}}\ \emph {et~al.}(2022)\citenamefont
  {{Hermans}}, \citenamefont {{Pompili}}, \citenamefont {{Beukers}},
  \citenamefont {{Baier}}, \citenamefont {{Borregaard}},\ and\ \citenamefont
  {{Hanson}}}]{Hermans2022}%
  \BibitemOpen
  \bibfield  {author} {\bibinfo {author} {\bibfnamefont {S.~L.~N.}\
  \bibnamefont {{Hermans}}}, \bibinfo {author} {\bibfnamefont {M.}~\bibnamefont
  {{Pompili}}}, \bibinfo {author} {\bibfnamefont {H.~K.~C.}\ \bibnamefont
  {{Beukers}}}, \bibinfo {author} {\bibfnamefont {S.}~\bibnamefont {{Baier}}},
  \bibinfo {author} {\bibfnamefont {J.}~\bibnamefont {{Borregaard}}},\ and\
  \bibinfo {author} {\bibfnamefont {R.}~\bibnamefont {{Hanson}}},\ }\bibfield
  {title} {\bibinfo {title} {{Qubit teleportation between non-neighbouring
  nodes in a quantum network}},\ }\href
  {https://doi.org/10.1038/s41586-022-04697-y} {\bibfield  {journal} {\bibinfo
  {journal} {Nature}\ }\textbf {\bibinfo {volume} {605}},\ \bibinfo {pages}
  {663} (\bibinfo {year} {2022})}\BibitemShut {NoStop}%
\bibitem [{\citenamefont {Liu}\ \emph {et~al.}(2023)\citenamefont {Liu},
  \citenamefont {Luo}, \citenamefont {Yu}, \citenamefont {Wang}, \citenamefont
  {Wang}, \citenamefont {Hu}, \citenamefont {Li}, \citenamefont {Zheng},
  \citenamefont {Yao}, \citenamefont {Yan}, \citenamefont {Teng}, \citenamefont
  {Jiang}, \citenamefont {Liu}, \citenamefont {Xie}, \citenamefont {Zhang},
  \citenamefont {Mao}, \citenamefont {Jiang}, \citenamefont {Zhang},
  \citenamefont {Bao},\ and\ \citenamefont {Pan}}]{Liu2023}%
  \BibitemOpen
  \bibfield  {author} {\bibinfo {author} {\bibfnamefont {J.-L.}\ \bibnamefont
  {Liu}}, \bibinfo {author} {\bibfnamefont {X.-Y.}\ \bibnamefont {Luo}},
  \bibinfo {author} {\bibfnamefont {Y.}~\bibnamefont {Yu}}, \bibinfo {author}
  {\bibfnamefont {C.-Y.}\ \bibnamefont {Wang}}, \bibinfo {author}
  {\bibfnamefont {B.}~\bibnamefont {Wang}}, \bibinfo {author} {\bibfnamefont
  {Y.}~\bibnamefont {Hu}}, \bibinfo {author} {\bibfnamefont {J.}~\bibnamefont
  {Li}}, \bibinfo {author} {\bibfnamefont {M.-Y.}\ \bibnamefont {Zheng}},
  \bibinfo {author} {\bibfnamefont {B.}~\bibnamefont {Yao}}, \bibinfo {author}
  {\bibfnamefont {Z.}~\bibnamefont {Yan}}, \bibinfo {author} {\bibfnamefont
  {D.}~\bibnamefont {Teng}}, \bibinfo {author} {\bibfnamefont {J.-W.}\
  \bibnamefont {Jiang}}, \bibinfo {author} {\bibfnamefont {X.-B.}\ \bibnamefont
  {Liu}}, \bibinfo {author} {\bibfnamefont {X.-P.}\ \bibnamefont {Xie}},
  \bibinfo {author} {\bibfnamefont {J.}~\bibnamefont {Zhang}}, \bibinfo
  {author} {\bibfnamefont {Q.-H.}\ \bibnamefont {Mao}}, \bibinfo {author}
  {\bibfnamefont {X.}~\bibnamefont {Jiang}}, \bibinfo {author} {\bibfnamefont
  {Q.}~\bibnamefont {Zhang}}, \bibinfo {author} {\bibfnamefont {X.-H.}\
  \bibnamefont {Bao}},\ and\ \bibinfo {author} {\bibfnamefont {J.-W.}\
  \bibnamefont {Pan}},\ }\href@noop {} {\bibinfo {title} {A multinode quantum
  network over a metropolitan area}} (\bibinfo {year} {2023}),\ \Eprint
  {https://arxiv.org/abs/2309.00221} {arXiv:2309.00221 [quant-ph]} \BibitemShut
  {NoStop}%
\bibitem [{\citenamefont {Nitsche}\ \emph {et~al.}(2020)\citenamefont
  {Nitsche}, \citenamefont {De}, \citenamefont {Barkhofen}, \citenamefont
  {Meyer-Scott}, \citenamefont {Tiedau}, \citenamefont {Sperling},
  \citenamefont {G\'abris}, \citenamefont {Jex},\ and\ \citenamefont
  {Silberhorn}}]{Nitsche2020}%
  \BibitemOpen
  \bibfield  {author} {\bibinfo {author} {\bibfnamefont {T.}~\bibnamefont
  {Nitsche}}, \bibinfo {author} {\bibfnamefont {S.}~\bibnamefont {De}},
  \bibinfo {author} {\bibfnamefont {S.}~\bibnamefont {Barkhofen}}, \bibinfo
  {author} {\bibfnamefont {E.}~\bibnamefont {Meyer-Scott}}, \bibinfo {author}
  {\bibfnamefont {J.}~\bibnamefont {Tiedau}}, \bibinfo {author} {\bibfnamefont
  {J.}~\bibnamefont {Sperling}}, \bibinfo {author} {\bibfnamefont
  {A.}~\bibnamefont {G\'abris}}, \bibinfo {author} {\bibfnamefont
  {I.}~\bibnamefont {Jex}},\ and\ \bibinfo {author} {\bibfnamefont
  {C.}~\bibnamefont {Silberhorn}},\ }\bibfield  {title} {\bibinfo {title}
  {Local versus global two-photon interference in quantum networks},\ }\href
  {https://doi.org/10.1103/PhysRevLett.125.213604} {\bibfield  {journal}
  {\bibinfo  {journal} {Phys. Rev. Lett.}\ }\textbf {\bibinfo {volume} {125}},\
  \bibinfo {pages} {213604} (\bibinfo {year} {2020})}\BibitemShut {NoStop}%
\bibitem [{\citenamefont {Collins}\ \emph {et~al.}(2007)\citenamefont
  {Collins}, \citenamefont {Jenkins}, \citenamefont {Kuzmich},\ and\
  \citenamefont {Kennedy}}]{Collins2007}%
  \BibitemOpen
  \bibfield  {author} {\bibinfo {author} {\bibfnamefont {O.~A.}\ \bibnamefont
  {Collins}}, \bibinfo {author} {\bibfnamefont {S.~D.}\ \bibnamefont
  {Jenkins}}, \bibinfo {author} {\bibfnamefont {A.}~\bibnamefont {Kuzmich}},\
  and\ \bibinfo {author} {\bibfnamefont {T.~A.~B.}\ \bibnamefont {Kennedy}},\
  }\bibfield  {title} {\bibinfo {title} {Multiplexed memory-insensitive quantum
  repeaters},\ }\href {https://doi.org/10.1103/physrevlett.98.060502}
  {\bibfield  {journal} {\bibinfo  {journal} {Phys. Rev. Lett.}\ }\textbf
  {\bibinfo {volume} {98}},\ \bibinfo {pages} {060502} (\bibinfo {year}
  {2007})}\BibitemShut {NoStop}%
\bibitem [{\citenamefont {Shchukin}\ \emph {et~al.}(2019)\citenamefont
  {Shchukin}, \citenamefont {Schmidt},\ and\ \citenamefont {van
  Loock}}]{Shchukin2019}%
  \BibitemOpen
  \bibfield  {author} {\bibinfo {author} {\bibfnamefont {E.}~\bibnamefont
  {Shchukin}}, \bibinfo {author} {\bibfnamefont {F.}~\bibnamefont {Schmidt}},\
  and\ \bibinfo {author} {\bibfnamefont {P.}~\bibnamefont {van Loock}},\
  }\bibfield  {title} {\bibinfo {title} {Waiting time in quantum repeaters with
  probabilistic entanglement swapping},\ }\href
  {https://doi.org/10.1103/physreva.100.032322} {\bibfield  {journal} {\bibinfo
   {journal} {Phys. Rev. A}\ }\textbf {\bibinfo {volume} {100}},\ \bibinfo
  {pages} {032322} (\bibinfo {year} {2019})}\BibitemShut {NoStop}%
\bibitem [{\citenamefont {Weinbrenner}\ \emph {et~al.}(2023)\citenamefont
  {Weinbrenner}, \citenamefont {Vandré}, \citenamefont {Coopmans},\ and\
  \citenamefont {Gühne}}]{Weinbrenner2023}%
  \BibitemOpen
  \bibfield  {author} {\bibinfo {author} {\bibfnamefont {L.~T.}\ \bibnamefont
  {Weinbrenner}}, \bibinfo {author} {\bibfnamefont {L.}~\bibnamefont
  {Vandré}}, \bibinfo {author} {\bibfnamefont {T.}~\bibnamefont {Coopmans}},\
  and\ \bibinfo {author} {\bibfnamefont {O.}~\bibnamefont {Gühne}},\
  }\href@noop {} {\bibinfo {title} {Aging and reliability of quantum networks}}
  (\bibinfo {year} {2023}),\ \Eprint {https://arxiv.org/abs/2305.19976}
  {arXiv:2305.19976 [quant-ph]} \BibitemShut {NoStop}%
\bibitem [{\citenamefont {Bugalho}\ \emph {et~al.}(2023)\citenamefont
  {Bugalho}, \citenamefont {Coutinho}, \citenamefont {Monteiro},\ and\
  \citenamefont {Omar}}]{Bugalho2023}%
  \BibitemOpen
  \bibfield  {author} {\bibinfo {author} {\bibfnamefont {L.}~\bibnamefont
  {Bugalho}}, \bibinfo {author} {\bibfnamefont {B.~C.}\ \bibnamefont
  {Coutinho}}, \bibinfo {author} {\bibfnamefont {F.~A.}\ \bibnamefont
  {Monteiro}},\ and\ \bibinfo {author} {\bibfnamefont {Y.}~\bibnamefont
  {Omar}},\ }\bibfield  {title} {\bibinfo {title} {Distributing multipartite
  entanglement over noisy quantum networks},\ }\href
  {https://doi.org/https://doi.org/10.22331/q-2023-02-09-920} {\bibfield
  {journal} {\bibinfo  {journal} {Quantum}\ }\textbf {\bibinfo {volume} {7}},\
  \bibinfo {pages} {920} (\bibinfo {year} {2023})}\BibitemShut {NoStop}%
\bibitem [{\citenamefont {Navascues}\ \emph {et~al.}(2020)\citenamefont
  {Navascues}, \citenamefont {Wolfe}, \citenamefont {Rosset},\ and\
  \citenamefont {Pozas-Kerstjens}}]{navascues2020genuine}%
  \BibitemOpen
  \bibfield  {author} {\bibinfo {author} {\bibfnamefont {M.}~\bibnamefont
  {Navascues}}, \bibinfo {author} {\bibfnamefont {E.}~\bibnamefont {Wolfe}},
  \bibinfo {author} {\bibfnamefont {D.}~\bibnamefont {Rosset}},\ and\ \bibinfo
  {author} {\bibfnamefont {A.}~\bibnamefont {Pozas-Kerstjens}},\ }\bibfield
  {title} {\bibinfo {title} {Genuine network multipartite entanglement},\
  }\href {https://doi.org/10.1103/PhysRevLett.125.240505} {\bibfield  {journal}
  {\bibinfo  {journal} {Phys. Rev. Lett.}\ }\textbf {\bibinfo {volume} {125}},\
  \bibinfo {pages} {240505} (\bibinfo {year} {2020})}\BibitemShut {NoStop}%
\bibitem [{\citenamefont {Kraft}\ \emph {et~al.}(2021)\citenamefont {Kraft},
  \citenamefont {Designolle}, \citenamefont {Ritz}, \citenamefont {Brunner},
  \citenamefont {G{\"u}hne},\ and\ \citenamefont {Huber}}]{kraft2021quantum}%
  \BibitemOpen
  \bibfield  {author} {\bibinfo {author} {\bibfnamefont {T.}~\bibnamefont
  {Kraft}}, \bibinfo {author} {\bibfnamefont {S.}~\bibnamefont {Designolle}},
  \bibinfo {author} {\bibfnamefont {C.}~\bibnamefont {Ritz}}, \bibinfo {author}
  {\bibfnamefont {N.}~\bibnamefont {Brunner}}, \bibinfo {author} {\bibfnamefont
  {O.}~\bibnamefont {G{\"u}hne}},\ and\ \bibinfo {author} {\bibfnamefont
  {M.}~\bibnamefont {Huber}},\ }\bibfield  {title} {\bibinfo {title} {Quantum
  entanglement in the triangle network},\ }\href
  {https://doi.org/10.1103/PhysRevA.103.L060401} {\bibfield  {journal}
  {\bibinfo  {journal} {Phys. Rev. A}\ }\textbf {\bibinfo {volume} {103}},\
  \bibinfo {pages} {L060401} (\bibinfo {year} {2021})}\BibitemShut {NoStop}%
\bibitem [{\citenamefont {Luo}(2021)}]{luo2021new}%
  \BibitemOpen
  \bibfield  {author} {\bibinfo {author} {\bibfnamefont {M.-X.}\ \bibnamefont
  {Luo}},\ }\bibfield  {title} {\bibinfo {title} {New genuinely multipartite
  entanglement},\ }\href {https://doi.org/10.1002/qute.202000123} {\bibfield
  {journal} {\bibinfo  {journal} {Adv. Quantum Technol.}\ }\textbf {\bibinfo
  {volume} {4}},\ \bibinfo {pages} {2000123} (\bibinfo {year}
  {2021})}\BibitemShut {NoStop}%
\bibitem [{\citenamefont {Tavakoli}\ \emph {et~al.}(2022)\citenamefont
  {Tavakoli}, \citenamefont {Pozas-Kerstjens}, \citenamefont {Luo},\ and\
  \citenamefont {Renou}}]{Tavakoli2022}%
  \BibitemOpen
  \bibfield  {author} {\bibinfo {author} {\bibfnamefont {A.}~\bibnamefont
  {Tavakoli}}, \bibinfo {author} {\bibfnamefont {A.}~\bibnamefont
  {Pozas-Kerstjens}}, \bibinfo {author} {\bibfnamefont {M.-X.}\ \bibnamefont
  {Luo}},\ and\ \bibinfo {author} {\bibfnamefont {M.-O.}\ \bibnamefont
  {Renou}},\ }\bibfield  {title} {\bibinfo {title} {Bell nonlocality in
  networks},\ }\href {https://doi.org/10.1088/1361-6633/ac41bb} {\bibfield
  {journal} {\bibinfo  {journal} {Rep. Prog. Phys.}\ }\textbf {\bibinfo
  {volume} {85}},\ \bibinfo {pages} {056001} (\bibinfo {year}
  {2022})}\BibitemShut {NoStop}%
\bibitem [{\citenamefont {Hansenne}\ \emph {et~al.}(2022)\citenamefont
  {Hansenne}, \citenamefont {Xu}, \citenamefont {Kraft},\ and\ \citenamefont
  {Gühne}}]{Hansenne2022}%
  \BibitemOpen
  \bibfield  {author} {\bibinfo {author} {\bibfnamefont {K.}~\bibnamefont
  {Hansenne}}, \bibinfo {author} {\bibfnamefont {Z.-P.}\ \bibnamefont {Xu}},
  \bibinfo {author} {\bibfnamefont {T.}~\bibnamefont {Kraft}},\ and\ \bibinfo
  {author} {\bibfnamefont {O.}~\bibnamefont {Gühne}},\ }\bibfield  {title}
  {\bibinfo {title} {Symmetries in quantum networks lead to no-go theorems for
  entanglement distribution and to verification techniques},\ }\href
  {https://doi.org/10.1038/s41467-022-28006-3} {\bibfield  {journal} {\bibinfo
  {journal} {Nat. Commun.}\ }\textbf {\bibinfo {volume} {13}},\ \bibinfo
  {pages} {496} (\bibinfo {year} {2022})}\BibitemShut {NoStop}%
\bibitem [{\citenamefont {Pappa}\ \emph {et~al.}(2012)\citenamefont {Pappa},
  \citenamefont {Chailloux}, \citenamefont {Wehner}, \citenamefont {Diamanti},\
  and\ \citenamefont {Kerenidis}}]{Pappa2012}%
  \BibitemOpen
  \bibfield  {author} {\bibinfo {author} {\bibfnamefont {A.}~\bibnamefont
  {Pappa}}, \bibinfo {author} {\bibfnamefont {A.}~\bibnamefont {Chailloux}},
  \bibinfo {author} {\bibfnamefont {S.}~\bibnamefont {Wehner}}, \bibinfo
  {author} {\bibfnamefont {E.}~\bibnamefont {Diamanti}},\ and\ \bibinfo
  {author} {\bibfnamefont {I.}~\bibnamefont {Kerenidis}},\ }\bibfield  {title}
  {\bibinfo {title} {Multipartite entanglement verification resistant against
  dishonest parties},\ }\href {https://doi.org/10.1103/PhysRevLett.108.260502}
  {\bibfield  {journal} {\bibinfo  {journal} {Phys. Rev. Lett.}\ }\textbf
  {\bibinfo {volume} {108}},\ \bibinfo {pages} {260502} (\bibinfo {year}
  {2012})}\BibitemShut {NoStop}%
\bibitem [{\citenamefont {McCutcheon}\ \emph {et~al.}(2016)\citenamefont
  {McCutcheon}, \citenamefont {Pappa}, \citenamefont {Bell}, \citenamefont
  {Mcmillan}, \citenamefont {Chailloux}, \citenamefont {Lawson}, \citenamefont
  {Mafu}, \citenamefont {Markham}, \citenamefont {Diamanti}, \citenamefont
  {Kerenidis} \emph {et~al.}}]{Mccutcheon2016}%
  \BibitemOpen
  \bibfield  {author} {\bibinfo {author} {\bibfnamefont {W.}~\bibnamefont
  {McCutcheon}}, \bibinfo {author} {\bibfnamefont {A.}~\bibnamefont {Pappa}},
  \bibinfo {author} {\bibfnamefont {B.~A.}\ \bibnamefont {Bell}}, \bibinfo
  {author} {\bibfnamefont {A.}~\bibnamefont {Mcmillan}}, \bibinfo {author}
  {\bibfnamefont {A.}~\bibnamefont {Chailloux}}, \bibinfo {author}
  {\bibfnamefont {T.}~\bibnamefont {Lawson}}, \bibinfo {author} {\bibfnamefont
  {M.}~\bibnamefont {Mafu}}, \bibinfo {author} {\bibfnamefont {D.}~\bibnamefont
  {Markham}}, \bibinfo {author} {\bibfnamefont {E.}~\bibnamefont {Diamanti}},
  \bibinfo {author} {\bibfnamefont {I.}~\bibnamefont {Kerenidis}}, \emph
  {et~al.},\ }\bibfield  {title} {\bibinfo {title} {Experimental verification
  of multipartite entanglement in quantum networks},\ }\href
  {https://doi.org/10.1038/ncomms13251} {\bibfield  {journal} {\bibinfo
  {journal} {Nat. Commun.}\ }\textbf {\bibinfo {volume} {7}},\ \bibinfo {pages}
  {13251} (\bibinfo {year} {2016})}\BibitemShut {NoStop}%
\bibitem [{\citenamefont {Murta}\ and\ \citenamefont
  {Baccari}(2023)}]{Murta2023}%
  \BibitemOpen
  \bibfield  {author} {\bibinfo {author} {\bibfnamefont {G.}~\bibnamefont
  {Murta}}\ and\ \bibinfo {author} {\bibfnamefont {F.}~\bibnamefont
  {Baccari}},\ }\href@noop {} {\bibinfo {title} {Self-testing with dishonest
  parties and device-independent entanglement certification in quantum
  networks}} (\bibinfo {year} {2023}),\ \Eprint
  {https://arxiv.org/abs/2305.10587} {arXiv:2305.10587 [quant-ph]} \BibitemShut
  {NoStop}%
\bibitem [{\citenamefont {Kao}\ \emph {et~al.}(2023)\citenamefont {Kao},
  \citenamefont {Huang}, \citenamefont {Tsai}, \citenamefont {Chen},
  \citenamefont {Sun}, \citenamefont {Li}, \citenamefont {Liao}, \citenamefont
  {Chuu}, \citenamefont {Lu},\ and\ \citenamefont {Li}}]{Kao2023}%
  \BibitemOpen
  \bibfield  {author} {\bibinfo {author} {\bibfnamefont {W.-T.}\ \bibnamefont
  {Kao}}, \bibinfo {author} {\bibfnamefont {C.-Y.}\ \bibnamefont {Huang}},
  \bibinfo {author} {\bibfnamefont {T.-J.}\ \bibnamefont {Tsai}}, \bibinfo
  {author} {\bibfnamefont {S.-H.}\ \bibnamefont {Chen}}, \bibinfo {author}
  {\bibfnamefont {S.-Y.}\ \bibnamefont {Sun}}, \bibinfo {author} {\bibfnamefont
  {Y.-C.}\ \bibnamefont {Li}}, \bibinfo {author} {\bibfnamefont {T.-L.}\
  \bibnamefont {Liao}}, \bibinfo {author} {\bibfnamefont {C.-S.}\ \bibnamefont
  {Chuu}}, \bibinfo {author} {\bibfnamefont {H.}~\bibnamefont {Lu}},\ and\
  \bibinfo {author} {\bibfnamefont {C.-M.}\ \bibnamefont {Li}},\ }\href@noop {}
  {\bibinfo {title} {Scalable quantum network determination with
  {E}instein-{P}odolsky-{R}osen steering}} (\bibinfo {year} {2023}),\ \Eprint
  {https://arxiv.org/abs/2303.17771} {arXiv:2303.17771 [quant-ph]} \BibitemShut
  {NoStop}%
\bibitem [{\citenamefont {Chen}\ \emph {et~al.}(2023)\citenamefont {Chen},
  \citenamefont {Doolittle}, \citenamefont {Larson}, \citenamefont {Saleem},\
  and\ \citenamefont {Chitambar}}]{Chen2023}%
  \BibitemOpen
  \bibfield  {author} {\bibinfo {author} {\bibfnamefont {D.~T.}\ \bibnamefont
  {Chen}}, \bibinfo {author} {\bibfnamefont {B.}~\bibnamefont {Doolittle}},
  \bibinfo {author} {\bibfnamefont {J.~M.}\ \bibnamefont {Larson}}, \bibinfo
  {author} {\bibfnamefont {Z.~H.}\ \bibnamefont {Saleem}},\ and\ \bibinfo
  {author} {\bibfnamefont {E.}~\bibnamefont {Chitambar}},\ }\href@noop {}
  {\bibinfo {title} {Inferring quantum network topology using local
  measurements}} (\bibinfo {year} {2023}),\ \Eprint
  {https://arxiv.org/abs/2212.07987} {arXiv:2212.07987 [quant-ph]} \BibitemShut
  {NoStop}%
\bibitem [{\citenamefont {Yang}\ \emph {et~al.}(2022)\citenamefont {Yang},
  \citenamefont {Yang},\ and\ \citenamefont {Luo}}]{Yang2022}%
  \BibitemOpen
  \bibfield  {author} {\bibinfo {author} {\bibfnamefont {X.}~\bibnamefont
  {Yang}}, \bibinfo {author} {\bibfnamefont {Y.-H.}\ \bibnamefont {Yang}},\
  and\ \bibinfo {author} {\bibfnamefont {M.-X.}\ \bibnamefont {Luo}},\
  }\bibfield  {title} {\bibinfo {title} {Strong entanglement distribution of
  quantum networks},\ }\href {https://doi.org/10.1103/PhysRevResearch.4.013153}
  {\bibfield  {journal} {\bibinfo  {journal} {Phys. Rev. Res.}\ }\textbf
  {\bibinfo {volume} {4}},\ \bibinfo {pages} {013153} (\bibinfo {year}
  {2022})}\BibitemShut {NoStop}%
\bibitem [{\citenamefont {Flammia}\ and\ \citenamefont
  {Liu}(2011)}]{Flammia2011}%
  \BibitemOpen
  \bibfield  {author} {\bibinfo {author} {\bibfnamefont {S.~T.}\ \bibnamefont
  {Flammia}}\ and\ \bibinfo {author} {\bibfnamefont {Y.-K.}\ \bibnamefont
  {Liu}},\ }\bibfield  {title} {\bibinfo {title} {Direct fidelity estimation
  from few {P}auli measurements},\ }\href
  {https://doi.org/10.1103/PhysRevLett.106.230501} {\bibfield  {journal}
  {\bibinfo  {journal} {Phys. Rev. Lett.}\ }\textbf {\bibinfo {volume} {106}},\
  \bibinfo {pages} {230501} (\bibinfo {year} {2011})}\BibitemShut {NoStop}%
\bibitem [{\citenamefont {Yu}\ \emph {et~al.}(2022)\citenamefont {Yu},
  \citenamefont {Shang},\ and\ \citenamefont {Gühne}}]{Yu2022}%
  \BibitemOpen
  \bibfield  {author} {\bibinfo {author} {\bibfnamefont {X.-D.}\ \bibnamefont
  {Yu}}, \bibinfo {author} {\bibfnamefont {J.}~\bibnamefont {Shang}},\ and\
  \bibinfo {author} {\bibfnamefont {O.}~\bibnamefont {Gühne}},\ }\bibfield
  {title} {\bibinfo {title} {Statistical methods for quantum state verification
  and fidelity estimation},\ }\href
  {https://doi.org/https://doi.org/10.1002/qute.202100126} {\bibfield
  {journal} {\bibinfo  {journal} {Adv. Quantum Technol.}\ }\textbf {\bibinfo
  {volume} {5}},\ \bibinfo {pages} {2100126} (\bibinfo {year}
  {2022})}\BibitemShut {NoStop}%
\bibitem [{\citenamefont {Saggio}\ \emph {et~al.}(2019)\citenamefont {Saggio},
  \citenamefont {Dimi{\'{c}}}, \citenamefont {Greganti}, \citenamefont
  {Rozema}, \citenamefont {Walther},\ and\ \citenamefont
  {Daki{\'{c}}}}]{Saggio2019}%
  \BibitemOpen
  \bibfield  {author} {\bibinfo {author} {\bibfnamefont {V.}~\bibnamefont
  {Saggio}}, \bibinfo {author} {\bibfnamefont {A.}~\bibnamefont {Dimi{\'{c}}}},
  \bibinfo {author} {\bibfnamefont {C.}~\bibnamefont {Greganti}}, \bibinfo
  {author} {\bibfnamefont {L.~A.}\ \bibnamefont {Rozema}}, \bibinfo {author}
  {\bibfnamefont {P.}~\bibnamefont {Walther}},\ and\ \bibinfo {author}
  {\bibfnamefont {B.}~\bibnamefont {Daki{\'{c}}}},\ }\bibfield  {title}
  {\bibinfo {title} {Experimental few-copy multipartite entanglement
  detection},\ }\href {https://doi.org/10.1038/s41567-019-0550-4} {\bibfield
  {journal} {\bibinfo  {journal} {Nat. Phys.}\ }\textbf {\bibinfo {volume}
  {15}},\ \bibinfo {pages} {935} (\bibinfo {year} {2019})}\BibitemShut
  {NoStop}%
\bibitem [{\citenamefont {Pallister}\ \emph {et~al.}(2018)\citenamefont
  {Pallister}, \citenamefont {Linden},\ and\ \citenamefont
  {Montanaro}}]{Pallister2018}%
  \BibitemOpen
  \bibfield  {author} {\bibinfo {author} {\bibfnamefont {S.}~\bibnamefont
  {Pallister}}, \bibinfo {author} {\bibfnamefont {N.}~\bibnamefont {Linden}},\
  and\ \bibinfo {author} {\bibfnamefont {A.}~\bibnamefont {Montanaro}},\
  }\bibfield  {title} {\bibinfo {title} {Optimal verification of entangled
  states with local measurements},\ }\href
  {https://doi.org/10.1103/PhysRevLett.120.170502} {\bibfield  {journal}
  {\bibinfo  {journal} {Phys. Rev. Lett.}\ }\textbf {\bibinfo {volume} {120}},\
  \bibinfo {pages} {170502} (\bibinfo {year} {2018})}\BibitemShut {NoStop}%
\bibitem [{\citenamefont {Mart\'{\i}nez~Vargas}\ \emph
  {et~al.}(2021)\citenamefont {Mart\'{\i}nez~Vargas}, \citenamefont {Hirche},
  \citenamefont {Sent\'{\i}s}, \citenamefont {Skotiniotis}, \citenamefont
  {Carrizo}, \citenamefont {Mu\~noz Tapia},\ and\ \citenamefont
  {Calsamiglia}}]{Martinez2021}%
  \BibitemOpen
  \bibfield  {author} {\bibinfo {author} {\bibfnamefont {E.}~\bibnamefont
  {Mart\'{\i}nez~Vargas}}, \bibinfo {author} {\bibfnamefont {C.}~\bibnamefont
  {Hirche}}, \bibinfo {author} {\bibfnamefont {G.}~\bibnamefont {Sent\'{\i}s}},
  \bibinfo {author} {\bibfnamefont {M.}~\bibnamefont {Skotiniotis}}, \bibinfo
  {author} {\bibfnamefont {M.}~\bibnamefont {Carrizo}}, \bibinfo {author}
  {\bibfnamefont {R.}~\bibnamefont {Mu\~noz Tapia}},\ and\ \bibinfo {author}
  {\bibfnamefont {J.}~\bibnamefont {Calsamiglia}},\ }\bibfield  {title}
  {\bibinfo {title} {Quantum sequential hypothesis testing},\ }\href
  {https://doi.org/10.1103/PhysRevLett.126.180502} {\bibfield  {journal}
  {\bibinfo  {journal} {Phys. Rev. Lett.}\ }\textbf {\bibinfo {volume} {126}},\
  \bibinfo {pages} {180502} (\bibinfo {year} {2021})}\BibitemShut {NoStop}%
\bibitem [{\citenamefont {Scarani}\ and\ \citenamefont
  {Gisin}(2001)}]{Scarani2001}%
  \BibitemOpen
  \bibfield  {author} {\bibinfo {author} {\bibfnamefont {V.}~\bibnamefont
  {Scarani}}\ and\ \bibinfo {author} {\bibfnamefont {N.}~\bibnamefont
  {Gisin}},\ }\bibfield  {title} {\bibinfo {title} {Spectral decomposition of
  {B}ell's operators for qubits},\ }\href
  {https://doi.org/10.1088/0305-4470/34/30/314} {\bibfield  {journal} {\bibinfo
   {journal} {J. Phys. A Math. Gen.}\ }\textbf {\bibinfo {volume} {34}},\
  \bibinfo {pages} {6043} (\bibinfo {year} {2001})}\BibitemShut {NoStop}%
\bibitem [{\citenamefont {Christandl}\ and\ \citenamefont
  {Wehner}(2005)}]{Christandl2005}%
  \BibitemOpen
  \bibfield  {author} {\bibinfo {author} {\bibfnamefont {M.}~\bibnamefont
  {Christandl}}\ and\ \bibinfo {author} {\bibfnamefont {S.}~\bibnamefont
  {Wehner}},\ }\bibfield  {title} {\bibinfo {title} {Quantum anonymous
  transmissions},\ }in\ \href {https://doi.org/10.1007/11593447_12} {\emph
  {\bibinfo {booktitle} {Lecture Notes in Computer Science}}}\ (\bibinfo
  {publisher} {Springer Berlin Heidelberg},\ \bibinfo {year} {2005})\ pp.\
  \bibinfo {pages} {217--235}\BibitemShut {NoStop}%
\bibitem [{\citenamefont {Tóth}\ and\ \citenamefont
  {Apellaniz}(2014)}]{Toth2014}%
  \BibitemOpen
  \bibfield  {author} {\bibinfo {author} {\bibfnamefont {G.}~\bibnamefont
  {Tóth}}\ and\ \bibinfo {author} {\bibfnamefont {I.}~\bibnamefont
  {Apellaniz}},\ }\bibfield  {title} {\bibinfo {title} {Quantum metrology from
  a quantum information science perspective},\ }\href
  {https://doi.org/10.1088/1751-8113/47/42/424006} {\bibfield  {journal}
  {\bibinfo  {journal} {J. Phys. A Math. Theor.}\ }\textbf {\bibinfo {volume}
  {47}},\ \bibinfo {pages} {424006} (\bibinfo {year} {2014})}\BibitemShut
  {NoStop}%
\bibitem [{\citenamefont {Giovannetti}\ \emph {et~al.}(2004)\citenamefont
  {Giovannetti}, \citenamefont {Lloyd},\ and\ \citenamefont
  {Maccone}}]{Giovannetti2004}%
  \BibitemOpen
  \bibfield  {author} {\bibinfo {author} {\bibfnamefont {V.}~\bibnamefont
  {Giovannetti}}, \bibinfo {author} {\bibfnamefont {S.}~\bibnamefont {Lloyd}},\
  and\ \bibinfo {author} {\bibfnamefont {L.}~\bibnamefont {Maccone}},\
  }\bibfield  {title} {\bibinfo {title} {Quantum-enhanced measurements: Beating
  the standard quantum limit},\ }\href
  {https://doi.org/10.1126/science.1104149} {\bibfield  {journal} {\bibinfo
  {journal} {Science}\ }\textbf {\bibinfo {volume} {306}},\ \bibinfo {pages}
  {1330} (\bibinfo {year} {2004})}\BibitemShut {NoStop}%
\bibitem [{\citenamefont {Barnett}\ and\ \citenamefont
  {Croke}(2009)}]{Barnett2009}%
  \BibitemOpen
  \bibfield  {author} {\bibinfo {author} {\bibfnamefont {S.~M.}\ \bibnamefont
  {Barnett}}\ and\ \bibinfo {author} {\bibfnamefont {S.}~\bibnamefont
  {Croke}},\ }\bibfield  {title} {\bibinfo {title} {Quantum state
  discrimination},\ }\href {https://doi.org/10.1364/AOP.1.000238} {\bibfield
  {journal} {\bibinfo  {journal} {Adv. Opt. Photonics}\ }\textbf {\bibinfo
  {volume} {1}},\ \bibinfo {pages} {238} (\bibinfo {year} {2009})}\BibitemShut
  {NoStop}%
\bibitem [{\citenamefont {G\"uhne}\ \emph {et~al.}(2007)\citenamefont
  {G\"uhne}, \citenamefont {Lu}, \citenamefont {Gao},\ and\ \citenamefont
  {Pan}}]{Guehne2007}%
  \BibitemOpen
  \bibfield  {author} {\bibinfo {author} {\bibfnamefont {O.}~\bibnamefont
  {G\"uhne}}, \bibinfo {author} {\bibfnamefont {C.-Y.}\ \bibnamefont {Lu}},
  \bibinfo {author} {\bibfnamefont {W.-B.}\ \bibnamefont {Gao}},\ and\ \bibinfo
  {author} {\bibfnamefont {J.-W.}\ \bibnamefont {Pan}},\ }\bibfield  {title}
  {\bibinfo {title} {Toolbox for entanglement detection and fidelity
  estimation},\ }\href {https://doi.org/10.1103/PhysRevA.76.030305} {\bibfield
  {journal} {\bibinfo  {journal} {Phys. Rev. A}\ }\textbf {\bibinfo {volume}
  {76}},\ \bibinfo {pages} {030305} (\bibinfo {year} {2007})}\BibitemShut
  {NoStop}%
\bibitem [{\citenamefont {Gühne}\ and\ \citenamefont
  {Tóth}(2009)}]{Guehne2009}%
  \BibitemOpen
  \bibfield  {author} {\bibinfo {author} {\bibfnamefont {O.}~\bibnamefont
  {Gühne}}\ and\ \bibinfo {author} {\bibfnamefont {G.}~\bibnamefont {Tóth}},\
  }\bibfield  {title} {\bibinfo {title} {Entanglement detection},\ }\href
  {https://doi.org/https://doi.org/10.1016/j.physrep.2009.02.004} {\bibfield
  {journal} {\bibinfo  {journal} {Physics Reports}\ }\textbf {\bibinfo {volume}
  {474}},\ \bibinfo {pages} {1} (\bibinfo {year} {2009})}\BibitemShut {NoStop}%
\bibitem [{\citenamefont {Horodecki}\ \emph {et~al.}(2009)\citenamefont
  {Horodecki}, \citenamefont {Horodecki}, \citenamefont {Horodecki},\ and\
  \citenamefont {Horodecki}}]{Horodecki2009}%
  \BibitemOpen
  \bibfield  {author} {\bibinfo {author} {\bibfnamefont {R.}~\bibnamefont
  {Horodecki}}, \bibinfo {author} {\bibfnamefont {P.}~\bibnamefont
  {Horodecki}}, \bibinfo {author} {\bibfnamefont {M.}~\bibnamefont
  {Horodecki}},\ and\ \bibinfo {author} {\bibfnamefont {K.}~\bibnamefont
  {Horodecki}},\ }\bibfield  {title} {\bibinfo {title} {Quantum entanglement},\
  }\href {https://doi.org/10.1103/RevModPhys.81.865} {\bibfield  {journal}
  {\bibinfo  {journal} {Rev. Mod. Phys.}\ }\textbf {\bibinfo {volume} {81}},\
  \bibinfo {pages} {865} (\bibinfo {year} {2009})}\BibitemShut {NoStop}%
\bibitem [{\citenamefont {Moroder}\ \emph {et~al.}(2013)\citenamefont
  {Moroder}, \citenamefont {Kleinmann}, \citenamefont {Schindler},
  \citenamefont {Monz}, \citenamefont {G\"uhne},\ and\ \citenamefont
  {Blatt}}]{Moroder2013}%
  \BibitemOpen
  \bibfield  {author} {\bibinfo {author} {\bibfnamefont {T.}~\bibnamefont
  {Moroder}}, \bibinfo {author} {\bibfnamefont {M.}~\bibnamefont {Kleinmann}},
  \bibinfo {author} {\bibfnamefont {P.}~\bibnamefont {Schindler}}, \bibinfo
  {author} {\bibfnamefont {T.}~\bibnamefont {Monz}}, \bibinfo {author}
  {\bibfnamefont {O.}~\bibnamefont {G\"uhne}},\ and\ \bibinfo {author}
  {\bibfnamefont {R.}~\bibnamefont {Blatt}},\ }\bibfield  {title} {\bibinfo
  {title} {Certifying systematic errors in quantum experiments},\ }\href
  {https://doi.org/10.1103/PhysRevLett.110.180401} {\bibfield  {journal}
  {\bibinfo  {journal} {Phys. Rev. Lett.}\ }\textbf {\bibinfo {volume} {110}},\
  \bibinfo {pages} {180401} (\bibinfo {year} {2013})}\BibitemShut {NoStop}%
\bibitem [{\citenamefont {Hoeffding}(1963)}]{Hoeffding1963}%
  \BibitemOpen
  \bibfield  {author} {\bibinfo {author} {\bibfnamefont {W.}~\bibnamefont
  {Hoeffding}},\ }\bibfield  {title} {\bibinfo {title} {Probability
  inequalities for sums of bounded random variables},\ }\href
  {https://doi.org/10.1080/01621459.1963.10500830} {\bibfield  {journal}
  {\bibinfo  {journal} {J. Am. Stat. Assoc.}\ }\textbf {\bibinfo {volume}
  {58}},\ \bibinfo {pages} {13} (\bibinfo {year} {1963})}\BibitemShut {NoStop}%
\bibitem [{\citenamefont {Meyer-Scott}\ \emph {et~al.}(2018)\citenamefont
  {Meyer-Scott}, \citenamefont {Prasannan}, \citenamefont {Eigner},
  \citenamefont {Quiring}, \citenamefont {Donohue}, \citenamefont {Barkhofen},\
  and\ \citenamefont {Silberhorn}}]{meyer2018}%
  \BibitemOpen
  \bibfield  {author} {\bibinfo {author} {\bibfnamefont {E.}~\bibnamefont
  {Meyer-Scott}}, \bibinfo {author} {\bibfnamefont {N.}~\bibnamefont
  {Prasannan}}, \bibinfo {author} {\bibfnamefont {C.}~\bibnamefont {Eigner}},
  \bibinfo {author} {\bibfnamefont {V.}~\bibnamefont {Quiring}}, \bibinfo
  {author} {\bibfnamefont {J.~M.}\ \bibnamefont {Donohue}}, \bibinfo {author}
  {\bibfnamefont {S.}~\bibnamefont {Barkhofen}},\ and\ \bibinfo {author}
  {\bibfnamefont {C.}~\bibnamefont {Silberhorn}},\ }\bibfield  {title}
  {\bibinfo {title} {High-performance source of spectrally pure, polarization
  entangled photon pairs based on hybrid integrated-bulk optics},\ }\href
  {https://doi.org/10.1364/OE.26.032475} {\bibfield  {journal} {\bibinfo
  {journal} {Opt. Express}\ }\textbf {\bibinfo {volume} {26}},\ \bibinfo
  {pages} {32475} (\bibinfo {year} {2018})}\BibitemShut {NoStop}%
\bibitem [{\citenamefont {Meyer-Scott}\ \emph {et~al.}(2022)\citenamefont
  {Meyer-Scott}, \citenamefont {Prasannan}, \citenamefont {Dhand},
  \citenamefont {Eigner}, \citenamefont {Quiring}, \citenamefont {Barkhofen},
  \citenamefont {Brecht}, \citenamefont {Plenio},\ and\ \citenamefont
  {Silberhorn}}]{Meyer2022}%
  \BibitemOpen
  \bibfield  {author} {\bibinfo {author} {\bibfnamefont {E.}~\bibnamefont
  {Meyer-Scott}}, \bibinfo {author} {\bibfnamefont {N.}~\bibnamefont
  {Prasannan}}, \bibinfo {author} {\bibfnamefont {I.}~\bibnamefont {Dhand}},
  \bibinfo {author} {\bibfnamefont {C.}~\bibnamefont {Eigner}}, \bibinfo
  {author} {\bibfnamefont {V.}~\bibnamefont {Quiring}}, \bibinfo {author}
  {\bibfnamefont {S.}~\bibnamefont {Barkhofen}}, \bibinfo {author}
  {\bibfnamefont {B.}~\bibnamefont {Brecht}}, \bibinfo {author} {\bibfnamefont
  {M.~B.}\ \bibnamefont {Plenio}},\ and\ \bibinfo {author} {\bibfnamefont
  {C.}~\bibnamefont {Silberhorn}},\ }\bibfield  {title} {\bibinfo {title}
  {Scalable generation of multiphoton entangled states by active feed-forward
  and multiplexing},\ }\href {https://doi.org/10.1103/PhysRevLett.129.150501}
  {\bibfield  {journal} {\bibinfo  {journal} {Phys. Rev. Lett.}\ }\textbf
  {\bibinfo {volume} {129}},\ \bibinfo {pages} {150501} (\bibinfo {year}
  {2022})}\BibitemShut {NoStop}%
\bibitem [{\citenamefont {Bouwmeester}\ \emph {et~al.}(1999)\citenamefont
  {Bouwmeester}, \citenamefont {Pan}, \citenamefont {Daniell}, \citenamefont
  {Weinfurter},\ and\ \citenamefont {Zeilinger}}]{bouwmeester99}%
  \BibitemOpen
  \bibfield  {author} {\bibinfo {author} {\bibfnamefont {D.}~\bibnamefont
  {Bouwmeester}}, \bibinfo {author} {\bibfnamefont {J.-W.}\ \bibnamefont
  {Pan}}, \bibinfo {author} {\bibfnamefont {M.}~\bibnamefont {Daniell}},
  \bibinfo {author} {\bibfnamefont {H.}~\bibnamefont {Weinfurter}},\ and\
  \bibinfo {author} {\bibfnamefont {A.}~\bibnamefont {Zeilinger}},\ }\bibfield
  {title} {\bibinfo {title} {Observation of three-photon
  {G}reenberger-{H}orne-{Z}eilinger entanglement},\ }\href
  {https://doi.org/10.1103/PhysRevLett.82.1345} {\bibfield  {journal} {\bibinfo
   {journal} {Phys. Rev. Lett.}\ }\textbf {\bibinfo {volume} {82}},\ \bibinfo
  {pages} {1345} (\bibinfo {year} {1999})}\BibitemShut {NoStop}%
\bibitem [{\citenamefont {Lu}\ \emph {et~al.}(2007)\citenamefont {Lu},
  \citenamefont {Zhou}, \citenamefont {Gühne}, \citenamefont {Gao},
  \citenamefont {Zhang}, \citenamefont {Yuan}, \citenamefont {Goebel},
  \citenamefont {Yang},\ and\ \citenamefont {Pan}}]{Lu2007}%
  \BibitemOpen
  \bibfield  {author} {\bibinfo {author} {\bibfnamefont {C.-Y.}\ \bibnamefont
  {Lu}}, \bibinfo {author} {\bibfnamefont {X.-Q.}\ \bibnamefont {Zhou}},
  \bibinfo {author} {\bibfnamefont {O.}~\bibnamefont {Gühne}}, \bibinfo
  {author} {\bibfnamefont {W.-B.}\ \bibnamefont {Gao}}, \bibinfo {author}
  {\bibfnamefont {J.}~\bibnamefont {Zhang}}, \bibinfo {author} {\bibfnamefont
  {Z.-S.}\ \bibnamefont {Yuan}}, \bibinfo {author} {\bibfnamefont
  {A.}~\bibnamefont {Goebel}}, \bibinfo {author} {\bibfnamefont
  {T.}~\bibnamefont {Yang}},\ and\ \bibinfo {author} {\bibfnamefont {J.-W.}\
  \bibnamefont {Pan}},\ }\bibfield  {title} {\bibinfo {title} {Experimental
  entanglement of six photons in graph states},\ }\href
  {https://doi.org/10.1038/nphys507} {\bibfield  {journal} {\bibinfo  {journal}
  {Nat. Phys.}\ }\textbf {\bibinfo {volume} {3}},\ \bibinfo {pages} {91}
  (\bibinfo {year} {2007})}\BibitemShut {NoStop}%
\bibitem [{\citenamefont {Wang}\ \emph {et~al.}(2016)\citenamefont {Wang},
  \citenamefont {Chen}, \citenamefont {Li}, \citenamefont {Huang},
  \citenamefont {Liu}, \citenamefont {Chen}, \citenamefont {Luo}, \citenamefont
  {Su}, \citenamefont {Wu}, \citenamefont {Li}, \citenamefont {Lu},
  \citenamefont {Hu}, \citenamefont {Jiang}, \citenamefont {Peng},
  \citenamefont {Li}, \citenamefont {Liu}, \citenamefont {Chen}, \citenamefont
  {Lu},\ and\ \citenamefont {Pan}}]{wang16}%
  \BibitemOpen
  \bibfield  {author} {\bibinfo {author} {\bibfnamefont {X.-L.}\ \bibnamefont
  {Wang}}, \bibinfo {author} {\bibfnamefont {L.-K.}\ \bibnamefont {Chen}},
  \bibinfo {author} {\bibfnamefont {W.}~\bibnamefont {Li}}, \bibinfo {author}
  {\bibfnamefont {H.-L.}\ \bibnamefont {Huang}}, \bibinfo {author}
  {\bibfnamefont {C.}~\bibnamefont {Liu}}, \bibinfo {author} {\bibfnamefont
  {C.}~\bibnamefont {Chen}}, \bibinfo {author} {\bibfnamefont {Y.-H.}\
  \bibnamefont {Luo}}, \bibinfo {author} {\bibfnamefont {Z.-E.}\ \bibnamefont
  {Su}}, \bibinfo {author} {\bibfnamefont {D.}~\bibnamefont {Wu}}, \bibinfo
  {author} {\bibfnamefont {Z.-D.}\ \bibnamefont {Li}}, \bibinfo {author}
  {\bibfnamefont {H.}~\bibnamefont {Lu}}, \bibinfo {author} {\bibfnamefont
  {Y.}~\bibnamefont {Hu}}, \bibinfo {author} {\bibfnamefont {X.}~\bibnamefont
  {Jiang}}, \bibinfo {author} {\bibfnamefont {C.-Z.}\ \bibnamefont {Peng}},
  \bibinfo {author} {\bibfnamefont {L.}~\bibnamefont {Li}}, \bibinfo {author}
  {\bibfnamefont {N.-L.}\ \bibnamefont {Liu}}, \bibinfo {author} {\bibfnamefont
  {Y.-A.}\ \bibnamefont {Chen}}, \bibinfo {author} {\bibfnamefont {C.-Y.}\
  \bibnamefont {Lu}},\ and\ \bibinfo {author} {\bibfnamefont {J.-W.}\
  \bibnamefont {Pan}},\ }\bibfield  {title} {\bibinfo {title} {Experimental
  ten-photon entanglement},\ }\href
  {https://doi.org/10.1103/PhysRevLett.117.210502} {\bibfield  {journal}
  {\bibinfo  {journal} {Phys. Rev. Lett.}\ }\textbf {\bibinfo {volume} {117}},\
  \bibinfo {pages} {210502} (\bibinfo {year} {2016})}\BibitemShut {NoStop}%
\bibitem [{\citenamefont {Mermin}(1990)}]{Mermin1990}%
  \BibitemOpen
  \bibfield  {author} {\bibinfo {author} {\bibfnamefont {N.~D.}\ \bibnamefont
  {Mermin}},\ }\bibfield  {title} {\bibinfo {title} {Extreme quantum
  entanglement in a superposition of macroscopically distinct states},\ }\href
  {https://doi.org/10.1103/PhysRevLett.65.1838} {\bibfield  {journal} {\bibinfo
   {journal} {Phys. Rev. Lett.}\ }\textbf {\bibinfo {volume} {65}},\ \bibinfo
  {pages} {1838} (\bibinfo {year} {1990})}\BibitemShut {NoStop}%
\bibitem [{\citenamefont {Meignant}\ \emph {et~al.}(2019)\citenamefont
  {Meignant}, \citenamefont {Markham},\ and\ \citenamefont
  {Grosshans}}]{Meignant2019}%
  \BibitemOpen
  \bibfield  {author} {\bibinfo {author} {\bibfnamefont {C.}~\bibnamefont
  {Meignant}}, \bibinfo {author} {\bibfnamefont {D.}~\bibnamefont {Markham}},\
  and\ \bibinfo {author} {\bibfnamefont {F.}~\bibnamefont {Grosshans}},\
  }\bibfield  {title} {\bibinfo {title} {Distributing graph states over
  arbitrary quantum networks},\ }\href
  {https://doi.org/10.1103/PhysRevA.100.052333} {\bibfield  {journal} {\bibinfo
   {journal} {Phys. Rev. A}\ }\textbf {\bibinfo {volume} {100}},\ \bibinfo
  {pages} {052333} (\bibinfo {year} {2019})}\BibitemShut {NoStop}%
\bibitem [{\citenamefont {Hahn}\ \emph {et~al.}(2022)\citenamefont {Hahn},
  \citenamefont {Dahlberg}, \citenamefont {Eisert},\ and\ \citenamefont
  {Pappa}}]{Hahn2022}%
  \BibitemOpen
  \bibfield  {author} {\bibinfo {author} {\bibfnamefont {F.}~\bibnamefont
  {Hahn}}, \bibinfo {author} {\bibfnamefont {A.}~\bibnamefont {Dahlberg}},
  \bibinfo {author} {\bibfnamefont {J.}~\bibnamefont {Eisert}},\ and\ \bibinfo
  {author} {\bibfnamefont {A.}~\bibnamefont {Pappa}},\ }\bibfield  {title}
  {\bibinfo {title} {Limitations of nearest-neighbor quantum networks},\ }\href
  {https://doi.org/10.1103/PhysRevA.106.L010401} {\bibfield  {journal}
  {\bibinfo  {journal} {Phys. Rev. A}\ }\textbf {\bibinfo {volume} {106}},\
  \bibinfo {pages} {L010401} (\bibinfo {year} {2022})}\BibitemShut {NoStop}%
\bibitem [{\citenamefont {G\"uhne}\ \emph {et~al.}(2023)\citenamefont
  {G\"uhne}, \citenamefont {Haapasalo}, \citenamefont {Kraft}, \citenamefont
  {Pellonp\"a\"a},\ and\ \citenamefont {Uola}}]{Guehne2023}%
  \BibitemOpen
  \bibfield  {author} {\bibinfo {author} {\bibfnamefont {O.}~\bibnamefont
  {G\"uhne}}, \bibinfo {author} {\bibfnamefont {E.}~\bibnamefont {Haapasalo}},
  \bibinfo {author} {\bibfnamefont {T.}~\bibnamefont {Kraft}}, \bibinfo
  {author} {\bibfnamefont {J.-P.}\ \bibnamefont {Pellonp\"a\"a}},\ and\
  \bibinfo {author} {\bibfnamefont {R.}~\bibnamefont {Uola}},\ }\bibfield
  {title} {\bibinfo {title} {Colloquium: Incompatible measurements in quantum
  information science},\ }\href {https://doi.org/10.1103/RevModPhys.95.011003}
  {\bibfield  {journal} {\bibinfo  {journal} {Rev. Mod. Phys.}\ }\textbf
  {\bibinfo {volume} {95}},\ \bibinfo {pages} {011003} (\bibinfo {year}
  {2023})}\BibitemShut {NoStop}%
\bibitem [{\citenamefont {Huang}\ \emph {et~al.}(2020)\citenamefont {Huang},
  \citenamefont {Kueng},\ and\ \citenamefont {Preskill}}]{Huang2020}%
  \BibitemOpen
  \bibfield  {author} {\bibinfo {author} {\bibfnamefont {H.-Y.}\ \bibnamefont
  {Huang}}, \bibinfo {author} {\bibfnamefont {R.}~\bibnamefont {Kueng}},\ and\
  \bibinfo {author} {\bibfnamefont {J.}~\bibnamefont {Preskill}},\ }\bibfield
  {title} {\bibinfo {title} {Predicting many properties of a quantum system
  from very few measurements},\ }\href
  {https://doi.org/10.1038/s41567-020-0932-7} {\bibfield  {journal} {\bibinfo
  {journal} {Nat. Phys.}\ }\textbf {\bibinfo {volume} {16}},\ \bibinfo {pages}
  {1050} (\bibinfo {year} {2020})}\BibitemShut {NoStop}%
\bibitem [{\citenamefont {Zhong}\ \emph {et~al.}(2018)\citenamefont {Zhong},
  \citenamefont {Li}, \citenamefont {Li}, \citenamefont {Peng}, \citenamefont
  {Su}, \citenamefont {Hu}, \citenamefont {He}, \citenamefont {Ding},
  \citenamefont {Zhang}, \citenamefont {Li} \emph {et~al.}}]{zhong2018}%
  \BibitemOpen
  \bibfield  {author} {\bibinfo {author} {\bibfnamefont {H.-S.}\ \bibnamefont
  {Zhong}}, \bibinfo {author} {\bibfnamefont {Y.}~\bibnamefont {Li}}, \bibinfo
  {author} {\bibfnamefont {W.}~\bibnamefont {Li}}, \bibinfo {author}
  {\bibfnamefont {L.-C.}\ \bibnamefont {Peng}}, \bibinfo {author}
  {\bibfnamefont {Z.-E.}\ \bibnamefont {Su}}, \bibinfo {author} {\bibfnamefont
  {Y.}~\bibnamefont {Hu}}, \bibinfo {author} {\bibfnamefont {Y.-M.}\
  \bibnamefont {He}}, \bibinfo {author} {\bibfnamefont {X.}~\bibnamefont
  {Ding}}, \bibinfo {author} {\bibfnamefont {W.}~\bibnamefont {Zhang}},
  \bibinfo {author} {\bibfnamefont {H.}~\bibnamefont {Li}}, \emph {et~al.},\
  }\bibfield  {title} {\bibinfo {title} {12-photon entanglement and scalable
  scattershot boson sampling with optimal entangled-photon pairs from
  parametric down-conversion},\ }\href
  {https://doi.org/10.1103/PhysRevLett.121.250505} {\bibfield  {journal}
  {\bibinfo  {journal} {Physical review letters}\ }\textbf {\bibinfo {volume}
  {121}},\ \bibinfo {pages} {250505} (\bibinfo {year} {2018})}\BibitemShut
  {NoStop}%
\bibitem [{\citenamefont {Kaniewski}(2016)}]{Kaniewski2009}%
  \BibitemOpen
  \bibfield  {author} {\bibinfo {author} {\bibfnamefont {J.}~\bibnamefont
  {Kaniewski}},\ }\bibfield  {title} {\bibinfo {title} {Analytic and nearly
  optimal self-testing bounds for the {C}lauser-{H}orne-{S}himony-{H}olt and
  {M}ermin inequalities},\ }\href
  {https://doi.org/10.1103/PhysRevLett.117.070402} {\bibfield  {journal}
  {\bibinfo  {journal} {Phys. Rev. Lett.}\ }\textbf {\bibinfo {volume} {117}},\
  \bibinfo {pages} {070402} (\bibinfo {year} {2016})}\BibitemShut {NoStop}%
\bibitem [{\citenamefont {Cao}\ \emph {et~al.}(2023)\citenamefont {Cao},
  \citenamefont {Morelli}, \citenamefont {Rozema}, \citenamefont {Zhang},
  \citenamefont {Tavakoli},\ and\ \citenamefont {Walther}}]{Cao2023}%
  \BibitemOpen
  \bibfield  {author} {\bibinfo {author} {\bibfnamefont {H.}~\bibnamefont
  {Cao}}, \bibinfo {author} {\bibfnamefont {S.}~\bibnamefont {Morelli}},
  \bibinfo {author} {\bibfnamefont {L.~A.}\ \bibnamefont {Rozema}}, \bibinfo
  {author} {\bibfnamefont {C.}~\bibnamefont {Zhang}}, \bibinfo {author}
  {\bibfnamefont {A.}~\bibnamefont {Tavakoli}},\ and\ \bibinfo {author}
  {\bibfnamefont {P.}~\bibnamefont {Walther}},\ }\href@noop {} {\bibinfo
  {title} {Genuine multipartite entanglement without fully controllable
  measurements}} (\bibinfo {year} {2023}),\ \Eprint
  {https://arxiv.org/abs/2310.11946} {arXiv:2310.11946 [quant-ph]} \BibitemShut
  {NoStop}%
\bibitem [{\citenamefont {Cotler}\ and\ \citenamefont
  {Wilczek}(2020)}]{cotler2020quantum}%
  \BibitemOpen
  \bibfield  {author} {\bibinfo {author} {\bibfnamefont {J.}~\bibnamefont
  {Cotler}}\ and\ \bibinfo {author} {\bibfnamefont {F.}~\bibnamefont
  {Wilczek}},\ }\bibfield  {title} {\bibinfo {title} {Quantum overlapping
  tomography},\ }\href {https://doi.org/10.1103/PhysRevLett.124.100401}
  {\bibfield  {journal} {\bibinfo  {journal} {Phys. Rev. Lett.}\ }\textbf
  {\bibinfo {volume} {124}},\ \bibinfo {pages} {100401} (\bibinfo {year}
  {2020})}\BibitemShut {NoStop}%
\bibitem [{\citenamefont {Bardyn}\ \emph {et~al.}(2009)\citenamefont {Bardyn},
  \citenamefont {Liew}, \citenamefont {Massar}, \citenamefont {McKague},\ and\
  \citenamefont {Scarani}}]{Bardyn2009}%
  \BibitemOpen
  \bibfield  {author} {\bibinfo {author} {\bibfnamefont {C.-E.}\ \bibnamefont
  {Bardyn}}, \bibinfo {author} {\bibfnamefont {T.~C.~H.}\ \bibnamefont {Liew}},
  \bibinfo {author} {\bibfnamefont {S.}~\bibnamefont {Massar}}, \bibinfo
  {author} {\bibfnamefont {M.}~\bibnamefont {McKague}},\ and\ \bibinfo {author}
  {\bibfnamefont {V.}~\bibnamefont {Scarani}},\ }\bibfield  {title} {\bibinfo
  {title} {Device-independent state estimation based on {B}ell's
  inequalities},\ }\href {https://doi.org/10.1103/PhysRevA.80.062327}
  {\bibfield  {journal} {\bibinfo  {journal} {Phys. Rev. A}\ }\textbf {\bibinfo
  {volume} {80}},\ \bibinfo {pages} {062327} (\bibinfo {year}
  {2009})}\BibitemShut {NoStop}%
\end{thebibliography}%

\onecolumngrid

\newpage

\title{Supplemental Material: \\Certifying the Topology of Quantum Networks: Theory and Experiment}
\date{\today}

\maketitle
\onecolumngrid

In this Supplemental Material we give more details on the theoretical and experimental results of the paper. In Appendix~A, we show in detail how one can retrieve all fidelities for smaller GHZ states from the $N+1$ global measurements described in the main text. In Appendix~B we then apply the Hoeffding inequality, which is a large deviation bound, to the specific problem of fidelity estimation. These results are then used in Appendix~C, in which we first formulate the hypotheses for the experimentally prepared states and then apply the results from Appendix~B to calculate bounds on the $p$-values of all hypotheses. Appendix~D contains more details about the experimental implementation as well as about the obtained data. At last, we show in Appendix~E how one could estimate the fidelity also in the semi-device-independent way, using the Mermin inequality.

\section{Appendix A: Coefficients for the fidelity calculation}
%
    Given an $N$-qubit \GHZ-state, it is known as mentioned in the main text that its fidelity $F_N = \Tr{\varrho \ket{\GHZ_N}\bra{\GHZ_N} }$ with some state $\varrho$ can be obtained by measuring $N+1$ different measurement settings only \cite{Guehne2007}. These consist of the Pauli-$Z$ measurement
    $
        \sigma_z^{\otimes N}
    $
    and the $N$ measurements
    \begin{align}
        \mathcal{M}_k^{\otimes N} = \left[ \cos \left( \frac{k \pi}{N}\right) \sigma_x + \sin\left( \frac{k \pi}{N} \right) \sigma_y \right]^{\otimes N},
    \end{align} 
    with $k=0,\dots, N-1$. Since these are all local measurements, all  measurement results of $\sigma_z^{\otimes n}$ and $\mathcal{M}_k^{\otimes n}$ for $n\leq N$ can be deduced from the same measurement data.
    We now show that, by using the results of these measurements, we can also compute the fidelity of any $n$-qubit reduced state of $\varrho$ (denoted $\varrho^{(n)}$) with an $n$-party \GHZ-state for $n < N$.
    For simplicity, we just write $F_n$ for this fidelity and omit the notation on which $n$ parties the fidelity is calculated.
    
    First, we note that the $\GHZ_n$ state can be written as 
    \begin{align}
        \ket{\GHZ_n}\bra{\GHZ_n} &= \frac{1}{2} \left( \ket{0}\bra{0}^{\otimes n} + \ket{1}\bra{1}^{\otimes n} \right) \\
        &+ \frac{1}{2} \left( \ket{0}\bra{1}^{\otimes n} + \ket{1}\bra{0}^{\otimes n} \right) \\
        &=: \frac{1}{2} ( \mathcal{D}_n + \mathcal{A}_n ) \label{eq:A_n}
    \end{align}
    such that the fidelity $F_n$ reads as
    \begin{align}
        F_n &= \Tr (\varrho \ket{\GHZ_n}\bra{\GHZ_n} \otimes \id_{N-n}) \\
        &= \frac{1}{2}\left[ \Tr (\varrho^{(n)} \mathcal{D}_n) + \Tr (\varrho^{(n)}\mathcal{A}_n) \right].
    \end{align}
    The expectation values of $\ket{0}\bra{0}^{\otimes n}$ and $\ket{1}\bra{1}^{\otimes n}$, and thus of $\mathcal{D}_n$, can be directly recovered from the measurement results of $\sigma_z^{\otimes N}$ for all $n\leq N$. 
    The calculation of the expectation value of $\mathcal{A}_n$ is not as trivial, and we show that there exist real coefficients $a_k$ ($k=0, \dots, N-1$) such that 
    \begin{align} \label{eq:DefOfVeca}
        \sum_{k=0}^{N-1} a_k \mathcal{M}_k^{\otimes n} = \mathcal{A}_n.
    \end{align}
    This means that the expectation value of $\mathcal{A}_n$ can be obtained from the measurement results of $\mathcal{M}_k^{\otimes n}$. Note that the fidelity of $\varrho$ with the $\ket{GHZ_n^-}$ state, which is given by 
    $
        \ket{\GHZ_n^-} = \frac{1}{\sqrt{2}} ( \ket{0}^{\otimes n} - \ket{1}^{\otimes n} ), 
    $
    can also be calculated directly from the same data by just switching the sign from $\mathcal{A}_n$ to $-\mathcal{A}_n$.
    
    \begin{theorem}\label{thm:fourier}
        The $N$-dimensional vector of coefficients $\vec{a}$ in Eq.~\eqref{eq:DefOfVeca} is given by
        $
            \vec{a} = \diag(e^{-i\pi \frac{kn}{N}}) \mathcal{F}^{-1}(\vec{c}),
        $
        where $\diag(e^{-i\pi \frac{kn}{N}})$ denotes the diagonal matrix with entries $e^{-i\pi \frac{kn}{N}}$ ($k=0,\dots, N-1$) and $\mathcal{F}^{-1}$ denotes the inverse discrete Fourier transform (DFT) of $\vec{c} = (1,0,\dots,0,1,c_{n+1},\dots,c_{N-1}) \in\mathbb{C}^N $.
    \end{theorem}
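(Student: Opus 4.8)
The idea is to make the problem diagonal by rewriting everything in terms of the raising and lowering operators $\sigma_{+}:=\ket{0}\bra{1}$ and $\sigma_{-}:=\ket{1}\bra{0}$. Since $\sigma_x=\sigma_{+}+\sigma_{-}$ and $\sigma_y=i(\sigma_{-}-\sigma_{+})$, each measured observable becomes a ``phased'' off-diagonal operator,
\begin{equation}
    \mathcal{M}_k = e^{-ik\pi/N}\,\sigma_{+} + e^{ik\pi/N}\,\sigma_{-},
\end{equation}
and expanding the tensor power gives
\begin{equation}
    \mathcal{M}_k^{\otimes n} = \sum_{S\subseteq\{1,\dots,n\}} e^{\,ik\pi(n-2|S|)/N}\;P_S ,
\end{equation}
where $P_S$ carries a factor $\sigma_{+}$ on the sites in $S$ and $\sigma_{-}$ on the remaining sites. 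The operators $\{P_S\}$ are a subset of the computational-basis matrix units and are therefore linearly independent; among them, only $P_{\emptyset}=\ket{1}\bra{0}^{\otimes n}$ and $P_{\{1,\dots,n\}}=\ket{0}\bra{1}^{\otimes n}$ occur in $\mathcal{A}_n$.

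Comparing the coefficients of each $P_S$ on the two sides of Eq.~\eqref{eq:DefOfVeca}, and using that this coefficient depends on $S$ only through $\ell:=|S|\in\{0,1,\dots,n\}$, I reduce the operator identity to the scalar linear system
\begin{equation}
    \sum_{k=0}^{N-1} a_k\, e^{\,ik\pi(n-2\ell)/N} = \delta_{\ell,0}+\delta_{\ell,n},\qquad \ell=0,1,\dots,n .
\end{equation}
Next I substitute $d_k:=e^{\,ik\pi n/N}a_k$, which factors the phase as $e^{\,ik\pi(n-2\ell)/N}=e^{\,ik\pi n/N}\,e^{-2\pi i k\ell/N}$, so that the system becomes $\sum_k d_k\, e^{-2\pi i k\ell/N}=\delta_{\ell,0}+\delta_{\ell,n}$ for $\ell=0,\dots,n$. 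This says exactly that the discrete Fourier transform of $\vec d$ has prescribed first $n+1$ coordinates, namely those of $\vec c=(1,0,\dots,0,1,c_{n+1},\dots,c_{N-1})$; since $n<N$, the higher coordinates $c_{n+1},\dots,c_{N-1}$ of $\mathcal{F}(\vec d)$ are left unconstrained, which is the source of the undetermined entries of $\vec c$. Inverting, $\vec d=\mathcal{F}^{-1}(\vec c)$, hence $\vec a=\diag(e^{-ik\pi n/N})\,\mathcal{F}^{-1}(\vec c)$. Choosing the free entries to vanish yields the explicit real solution $a_k=\tfrac{2}{N}\cos(k\pi n/N)$, which in particular shows that the system is solvable over the reals.

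The step requiring the most care is the index bookkeeping in the reduction to the scalar system: one must check that $n-2\ell$ always has the same parity as $n$, so that after the substitution $d_k=e^{ik\pi n/N}a_k$ the exponentials are genuine $N$-th roots of unity and no aliasing occurs; that $\ell$ ranges over exactly $\{0,\dots,n\}$ with $n<N$, so that the Fourier relation is well posed and leaves precisely $N-1-n$ free parameters; and that linear independence of $\{P_S\}$ legitimately licenses the coefficient comparison. Everything else — the orthogonality sum $\sum_{k=0}^{N-1}e^{i\pi k m/N}\in\{0,N\}$ implicit in inverting the DFT, and the direct verification that the minimal solution is real — is routine.
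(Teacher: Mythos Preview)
Your proposal is correct and follows essentially the same route as the paper's proof: rewrite $\mathcal{M}_k$ in terms of $\ket{0}\bra{1}$ and $\ket{1}\bra{0}$ (your $\sigma_{\pm}$), expand the tensor power so that each term is labelled by the number $\ell$ of $\sigma_{+}$-factors, compare coefficients against $\mathcal{A}_n$, factor out the phase $e^{ik\pi n/N}$ to recognise the resulting linear system as the first $n{+}1$ rows of a discrete Fourier transform, and invert. The only cosmetic differences are that you make the linear-independence argument for the coefficient comparison explicit and that you already quote the minimal-norm real solution $a_k=\tfrac{2}{N}\cos(k\pi n/N)$, which in the paper is deferred to the subsequent Corollary.
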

    \begin{proof}
        Let us first reformulate $\mathcal{M}_k$ as 
        \begin{align*}
            \mathcal{M}_k 
            = e^{-i\frac{k\pi}{N}}\ket{0}\bra{1} + e^{i\frac{k\pi}{N}} \ket{1}\bra{0}, 
        \end{align*}
        for $k=0,\dots, N-1$, by rewriting $\sigma_x$ and $\sigma_y$ in the computational basis. We therefore have
        \begin{align*}
            \sum_{k=0}^{N-1} a_k \mathcal{M}_k^{\otimes n} 
            &= \sum_{j=0}^n \left[ \sum_{k=0}^{N-1} e^{-i\pi\frac{k}{N} j} e^{i\pi\frac{k}{N} (n-j)} a_k \right] \\
            &\qquad \cdot \sum_\pi \left[ \bigotimes_{l=0}^j \ket{0}\bra{1} \bigotimes_{l=j+1}^n \ket{1}\bra{0} \right] \\
            &=: \sum_{j=0}^n c_j \sum_\pi \left[ \bigotimes_{l=0}^j \ket{0}\bra{1} \bigotimes_{l=j+1}^n \ket{1}\bra{0} \right], 
        \end{align*}
        where $\sum_\pi \dots$ denotes the sum over all permutations leading to different terms.
        It now follows directly that we must have $c_0 = c_n = 1$ and $c_{1} = \dots = c_{n-1} = 0$ with
        \begin{align}\label{eq:coeffFourier}
            c_j 
            = \sum_{k=0}^{N-1} e^{-2\pi i\frac{k j}{N}} e^{i\pi\frac{kn}{N}} a_k 
        \end{align}
        for $j=0,\dots,n$.
        These are the first $n+1$ entries of the DFT of $\diag(e^{i\pi\frac{kn}{N}}) \vec{a}$.
        By defining $c_{n+1},\dots,c_{N-1}$ through Eq.~\eqref{eq:coeffFourier}, we can consider the extended vector $\vec{c} = (1,0,\dots,0,1,c_{n+1},\dots,c_{N-1})$ as the entire DFT $\vec{c} = \mathcal{F}(\diag(e^{i\pi\frac{kn}{N}}) \vec{a})$. 
        Using the inverse of the DFT, we arrive at
        $
            \vec{a} = \diag(e^{- i\pi\frac{kn}{N}}) \mathcal{F}^{-1}(\vec{c}),
        $
        or, alternatively,
        $
            a_k = \frac{1}{N} e^{- i\pi\frac{kn}{N}} \sum_{j=0}^{N-1} e^{2\pi i\frac{k j}{N}}  c_j,
        $
        with $c_{n+1},\dots,c_{N-1} \in \mathbb{C}$.
    \end{proof}
    
    Note that the coefficients $a_k$ are not necessarily real for an arbitrary choice of $c_{n+1},\dots,c_{N-1}$.
    In that case, one can simply take the real part $\mathrm{Re}(\vec{a})$ only without affecting the validity of Eq.~\eqref{eq:DefOfVeca} as $\mathcal{M}_k$ and $\mathcal{A}_n$ are hermitian.
    However, we show in the following Corollary that, from requiring a minimal norm of the coefficients $\norm{\vec{a}}_2$, it already follows that the vector $\vec{a}$ is real-valued.
    Minimizing the norm $\norm{\vec{a}}_2$ leads to the best bounds on the accuracy of our fidelity estimate. More details about this bound can be found in Appendix~B.
    
    For the next part, we recall that $\norm{\mathcal{F}(\vec{a})}_2^2 = N \norm{\vec{a}}_2^2 $ holds true in the case of the DFT.
    Since $\vec{c}$ is the DFT of $\vec{a}$, up to a complex phase, it directly follows that $\norm{\vec{c}}_2^2 = N \norm{\vec{a}}_2^2$.
    Keeping this in mind, we can directly calculate the coefficients $a_k$ with minimal norm $\norm{\vec{a}}_2$ from the above Theorem.
    \begin{corollary}\label{cor:coefficients}
        The coefficients $a_k$ minimizing the norm $\norm{\vec{a}}_2$ for fixed $n$ and $N$ are given by
        \begin{align}
            a_k = \frac{1}{N} \begin{cases}
            (-1)^k\ &\text{for } n=N, \\
            2\cos(\pi\frac{kn}{N})\ &\text{for } n < N .
            \end{cases}
        \end{align}
    \end{corollary}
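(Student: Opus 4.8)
The plan is to use the Parseval identity for the discrete Fourier transform to turn the constrained minimization of $\|\vec a\|_2$ into a trivial one. By Theorem~\ref{thm:fourier}, the admissible coefficient vectors are exactly those of the form $\vec a = \diag(e^{-i\pi kn/N})\,\mathcal F^{-1}(\vec c)$, where $\vec c$ has the fixed entries $c_0 = c_n = 1$ with all entries strictly between them equal to zero (indices read modulo $N$, so that for $n=N$ this simply forces $c_0=1$ and $c_1=\dots=c_{N-1}=0$), and free entries $c_{n+1},\dots,c_{N-1}$. Since $\diag(e^{i\pi kn/N})$ is unitary and $\|\mathcal F(\vec x)\|_2^2 = N\|\vec x\|_2^2$ for the DFT, one has $N\|\vec a\|_2^2 = \|\vec c\|_2^2$, so minimizing $\|\vec a\|_2$ is equivalent to minimizing $\|\vec c\|_2$ over $c_{n+1},\dots,c_{N-1}$.

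First I would evaluate $\|\vec c\|_2^2$. For $n<N$ the fixed block contributes $|c_0|^2+|c_n|^2 = 2$, so $\|\vec c\|_2^2 = 2 + \sum_{j=n+1}^{N-1}|c_j|^2$, which is uniquely minimized by $c_{n+1}=\dots=c_{N-1}=0$; for $n=N$ there are no free parameters and $\vec c = (1,0,\dots,0)$ is forced. Then I would substitute the optimal $\vec c$ into the closed form $a_k = \tfrac1N e^{-i\pi kn/N}\sum_{j=0}^{N-1} e^{2\pi i kj/N} c_j$ from Theorem~\ref{thm:fourier}: for $n<N$ only the terms $j=0$ and $j=n$ survive, giving $a_k = \tfrac1N e^{-i\pi kn/N}\bigl(1+e^{2\pi i kn/N}\bigr) = \tfrac2N\cos(\pi kn/N)$, while for $n=N$ only $j=0$ survives, giving $a_k = \tfrac1N e^{-i\pi k} = \tfrac{(-1)^k}{N}$. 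As a byproduct, the minimum-norm solution comes out real, which is exactly the claim stated just before the Corollary.

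I do not expect a real obstacle here; the only delicate points are bookkeeping ones. One has to read off from the proof of Theorem~\ref{thm:fourier} precisely which components of $\vec c$ are pinned down by the decomposition requirement \eqref{eq:DefOfVeca} (and note that any component not forced to $1$ is set to $0$ at the optimum regardless), and one should check the two boundary cases: $n=N-1$, where the free tail is empty but the formula $a_k=\tfrac2N\cos(\pi kn/N)$ still holds, and $n=N$, where the index $n$ wraps onto $0$ and $\vec c$ collapses to a single nonzero entry. Finally, since $\vec c\mapsto\vec a$ is a bijection and $\|\vec c\|_2^2$ is strictly convex in the free entries, the minimizer is unique, so the displayed $a_k$ are genuinely \emph{the} norm-minimizing coefficients rather than merely one admissible choice.
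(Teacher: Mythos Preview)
Your proposal is correct and follows essentially the same route as the paper: use the DFT norm relation $N\|\vec a\|_2^2=\|\vec c\|_2^2$ to reduce the minimization to setting the free entries $c_{n+1},\dots,c_{N-1}$ to zero, then substitute the resulting $\vec c$ into the explicit formula $a_k=\tfrac1N e^{-i\pi kn/N}\sum_j e^{2\pi ikj/N}c_j$ and simplify the two cases. Your added remarks on uniqueness via strict convexity and on the boundary cases $n=N-1$ and $n=N$ (the latter via index wrap-around) are welcome clarifications but do not change the argument.
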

    \begin{proof}
        Since the vectors $\vec{a}$ and $\vec{c}$ fulfill $\norm{\vec{c}}_2^2 = N \norm{\vec{a}}_2^2$, the norm of $\vec{a}$ is minimal if and only if the norm of $\vec{c}$ is minimal.
        According to Theorem \ref{thm:fourier}, the norm of the vector $\vec{c}$ reads as
        $
            \norm{\vec{c}}_2^2 = 1 + 0 + \dots + 0 + 1 + |c_{n+1}|^2 + \dots + |c_{N-1}|^2 ,
        $
        which is minimal for $c_{n+1} = \dots = c_{N-1} =0$.
        Thus, the ideal choice for the vector $\vec{c}$ only contains up to two entries equal to one and has therefore a norm of $\norm{\vec{c}}_2 = \sqrt{2}$ for $n<N$ and $\norm{\vec{c}}_2 = 1$ for $n=N$.
        In the case of $n<N$, this leads to 
        $
            N a_k 
            = e^{- i\pi\frac{kn}{N}} + e^{\pi i\frac{k n}{N}} 
            = 2 \cos(\pi\frac{kn}{N})
        $
        and, in the case of $n=N$, to 
        $
            N a_k = e^{- i\pi\frac{kN}{N}} = (-1)^k,
        $
        which proves the statement. 
    \end{proof}
    Note that in the case $n=N$ these are exactly the coefficients given in the main text before. 
    It follows directly from the above Corollary that the norm of the minimal coefficient vector $\vec{a}$ only depends on the number of qubits $\norm{\vec{a}}_2^2 = \frac{1}{N} \norm{c}_2^2 \leq \frac{2}{N}$. 
    We use this result in Appendix~B to characterize the accuracy of the fidelities obtained from the experimental data.
    
\section{Appendix B: Large deviation bounds for fidelity estimation}
    
    In the asymptotic limit, the experimental data for a measurement $M$ ought to reproduce exactly the expectation value $\langle M \rangle$.
    However, since every experiment is restricted to a finite number of measurements, one can only calculate estimates of the expectation values from the measurement results.
    Here we describe a way to bound the accuracy of the calculated estimate using Hoeffding's inequality \cite{Flammia2011, Hoeffding1963}. 
    
    The setting is the following:
    We want to calculate the fidelity $F$, which is a linear combination of some observables $M_i$,
    \begin{align}
        F = \sum_{i=0}^N d_i \langle M_i \rangle .
    \end{align}
    In our case, the coefficients and measurements will be the same as described Appendix~A.
    For now, however, we will look at the general case described above.
    In the experiment, each of the measurements $M_i$ will be measured for a finite number $m_i$ where each of these single measurements leads to a measurement result $A_{ij}$ ($j=1,\dots, m_i$). For example, measuring $m_i=10$ times the measurement $M_i=\sigma_z$ leads to the ten measurement results $A_{ij}=+1$ or $A_{ij}=-1$ for $j=1,\dots, 10$.
    The estimate for the expectation value of $M_i$ is then calculated by
    \begin{align}
        \widehat{ \langle M_i \rangle} = \frac{1}{m_i} \sum_{j=1}^{m_i} A_{ij},
    \end{align}
    and the estimate for the fidelity $F$ is obtained accordingly by
    \begin{align}
        \hat{F}  = \sum_{i=0}^N d_i \widehat{ \langle M_i \rangle} = \sum_{i=0}^N \frac{ d_i}{m_i} \sum_{j=1}^{m_i} A_{ij} .
    \end{align}
    We now want to characterize the accuracy of this estimate by using Hoeffding's inequality \cite{Hoeffding1963}, which we briefly recall:
    \begin{lemma}[Hoeffding's inequality \cite{Hoeffding1963}]
        Let $X_1, \dots, X_m$ be independent, bounded random variables such that there exist $s_i$ and $t_i$ with $s_i \leq X_i \leq t_i$.
        Then, the sum $S_n = \sum_{i=1}^n X_i$ fulfills
        \begin{align}
            \Pr[ (S_n - \langle S_n ) \rangle > \epsilon] < \exp{-\frac{2 \epsilon}{C}},
        \end{align}
        with $C= \sum_{i=1}^m (t_i - s_i)^2$.
    \end{lemma}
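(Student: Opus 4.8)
\emph{Proof strategy.} The plan is the standard exponential-moment (Chernoff) argument. First I would fix a free parameter $\lambda>0$ and combine monotonicity of the exponential with Markov's inequality to pass to a moment generating function:
\begin{align}
\Pr[S_n-\mean{S_n}>\epsilon]=\Pr\!\left[e^{\lambda(S_n-\mean{S_n})}>e^{\lambda\epsilon}\right]\leq e^{-\lambda\epsilon}\,\mean{e^{\lambda(S_n-\mean{S_n})}}.
\end{align}
Because the $X_i$ are independent, the exponential moment factorizes, $\mean{e^{\lambda(S_n-\mean{S_n})}}=\prod_{i=1}^n\mean{e^{\lambda(X_i-\mean{X_i})}}$, so it suffices to control the moment generating function of a single centered, bounded variable.

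The central ingredient is Hoeffding's lemma: if $Y$ satisfies $a\leq Y\leq b$ and $\mean{Y}=0$, then $\mean{e^{\lambda Y}}\leq e^{\lambda^2(b-a)^2/8}$. I would prove this from convexity of $t\mapsto e^{\lambda t}$ on $[a,b]$, namely
\begin{align}
e^{\lambda y}\leq\frac{b-y}{b-a}\,e^{\lambda a}+\frac{y-a}{b-a}\,e^{\lambda b}\qquad(y\in[a,b]),
\end{align}
and then taking expectations using $\mean{Y}=0$, which gives $\mean{e^{\lambda Y}}\leq\tfrac{b}{b-a}e^{\lambda a}-\tfrac{a}{b-a}e^{\lambda b}=:e^{\psi(\lambda)}$. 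A short computation shows $\psi(0)=\psi'(0)=0$ and $\psi''(\lambda)=p_\lambda(1-p_\lambda)(b-a)^2$ for a probability weight $p_\lambda\in[0,1]$, hence $\psi''(\lambda)\leq(b-a)^2/4$; Taylor's theorem with Lagrange remainder then yields $\psi(\lambda)\leq\lambda^2(b-a)^2/8$.

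Applying the lemma to $Y=X_i-\mean{X_i}$, whose range has width $b-a=t_i-s_i$, and combining with the factorization gives $\mean{e^{\lambda(S_n-\mean{S_n})}}\leq e^{\lambda^2C/8}$ with $C=\sum_{i}(t_i-s_i)^2$, and therefore $\Pr[S_n-\mean{S_n}>\epsilon]\leq e^{-\lambda\epsilon+\lambda^2C/8}$. Finally I would minimize the exponent over $\lambda>0$; the optimal choice $\lambda=4\epsilon/C$ produces the bound $e^{-2\epsilon^2/C}$, which is the assertion (the exponent being quadratic in $\epsilon$).

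I expect the main obstacle to be Hoeffding's lemma, and within it the uniform estimate $\psi''(\lambda)\leq(b-a)^2/4$: this ultimately reduces to the elementary inequality $p(1-p)\leq1/4$, but one must first correctly identify the $\lambda$-dependent quantity $p_\lambda$ as a genuine probability weight obtained by rewriting $e^{\psi(\lambda)}$. By contrast, the Chernoff step, the use of independence, and the final one-dimensional optimization over $\lambda$ are all routine.
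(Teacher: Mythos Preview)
Your argument is correct and is precisely the classical Chernoff--Hoeffding proof: Markov's inequality on the exponential moment, factorization by independence, Hoeffding's lemma via convexity and the bound $p(1-p)\leq 1/4$ on $\psi''$, and finally optimization in $\lambda$. The paper itself does not supply a proof of this lemma at all; it simply states the inequality and refers the reader to Hoeffding's original article \cite{Hoeffding1963}, so there is nothing to compare against beyond noting that what you outline is exactly the argument in that reference.

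One small remark: you rightly arrive at the exponent $-2\epsilon^2/C$, and your parenthetical ``the exponent being quadratic in $\epsilon$'' flags that the statement as printed in the paper has a typo (it shows $-2\epsilon/C$). The quadratic exponent is the correct one, and indeed this is the form used later in the paper (see e.g.\ Eq.~\eqref{eq:estFidelity}).
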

    For a proof, see \cite{Hoeffding1963}.
    
    Using this, we can now calculate the accuracy of the estimate $\hat{F}$.
    
    \begin{lemma}
        Defining the fidelities, coefficients, and measurements in the same way as in Appendix~A, and denoting the minimal number of measurements by $\mu= \min_i (m_i)$, the estimate of the fidelity $\esf$ obeys 
        $
            \Pr[ (\esf_n - F_n ) > \epsilon] < \delta
        $
        for $\delta(\epsilon) = \exp{- \frac{8 \mu \epsilon^2}{1+ 8/N}} \Leftrightarrow \epsilon(\delta) = \sqrt{-\frac{1+8/N}{8 \mu} \ln{\delta}}$.
    \end{lemma}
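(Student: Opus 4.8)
\emph{Proof plan.} The goal is to realize $\esf_n$ as a sum of independent, bounded random variables — one for each individual run of each measurement setting — and then invoke Hoeffding's inequality, the only nontrivial input being the bound $\norm{\vec a}_2^2 \le 2/N$ on the minimal-norm coefficients from Corollary~\ref{cor:coefficients}. First I would recall from Appendix~A that
\begin{align}
  F_n = \frac12\,\mean{\mathcal{D}_n} + \frac12\sum_{k=0}^{N-1} a_k\,\mean{\mathcal{M}_k^{\otimes n}},
\end{align}
where $\mean{\mathcal{D}_n}$ is read off from the $\sigma_z^{\otimes N}$ data restricted to the $n$ qubits of interest, and each $\mean{\mathcal{M}_k^{\otimes n}}$ from the $\mathcal{M}_k^{\otimes N}$ data in the same way. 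Denoting by $D_j \in \{0,1\}$ the value of $\mathcal{D}_n$ recorded in run $j$ of the $\sigma_z$ setting and by $A_{kj}\in\{-1,+1\}$ the product of the relevant $\pm1$ outcomes in run $j$ of setting $\mathcal{M}_k$, the estimator becomes
\begin{align}
  \esf_n = \frac{1}{2m_0}\sum_{j=1}^{m_0} D_j + \sum_{k=0}^{N-1}\frac{a_k}{2m_k}\sum_{j=1}^{m_k} A_{kj}.
\end{align}
Since distinct runs act on independent copies of the state, all summands are independent, and since they average to the corresponding expectation values, $\esf_n$ is unbiased, $\mathbb{E}[\esf_n]=F_n$.

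\emph{Bounding the Hoeffding constant.} Next I would apply the lemma recalled above (with the standard sub-Gaussian tail $\exp(-2\epsilon^2/C)$) to $S=\esf_n$, writing $C$ as the sum of the squared lengths of the ranges of the individual weighted summands. A weighted $\sigma_z$ summand ranges over an interval of length $\tfrac{1}{2m_0}$ (since $D_j\in\{0,1\}$ has width $1$ and the coefficient is $\tfrac{1}{2m_0}$), and a weighted $\mathcal{M}_k$ summand over an interval of length $\tfrac{|a_k|}{m_k}$ (since $A_{kj}\in\{-1,+1\}$ has width $2$ and the coefficient is $\tfrac{|a_k|}{2m_k}$). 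Summing over all runs and using $m_i\ge\mu$,
\begin{align}
  C = \frac{1}{4m_0} + \sum_{k=0}^{N-1}\frac{a_k^2}{m_k} \le \frac{1}{\mu}\Big(\frac14 + \norm{\vec a}_2^2\Big) \le \frac{1}{\mu}\Big(\frac14 + \frac2N\Big) = \frac{1+8/N}{4\mu},
\end{align}
where the last inequality is the minimal-norm bound from Corollary~\ref{cor:coefficients}. Plugging this into Hoeffding's inequality gives $\Pr[\esf_n - F_n > \epsilon] < \exp(-2\epsilon^2/C) \le \exp\!\big(-\tfrac{8\mu\epsilon^2}{1+8/N}\big) = \delta(\epsilon)$, and solving $\delta(\epsilon)$ for $\epsilon$ yields $\epsilon(\delta) = \sqrt{-\tfrac{1+8/N}{8\mu}\ln\delta}$.

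\emph{Main obstacle.} There is no conceptually hard step; the delicate point is the bookkeeping in the computation of $C$ — keeping the factors $\tfrac12$ in front of $\mathcal{D}_n$ and of each $a_k$, distinguishing the $\{0,1\}$-valued diagonal observable (range $1$) from the $\pm1$-valued $\mathcal{M}_k^{\otimes n}$ (range $2$), and noting that extracting an $n$-qubit marginal from an $N$-qubit outcome string does not enlarge any of these ranges. One should also use the correct quadratic Hoeffding tail $\exp(-2\epsilon^2/C)$ rather than the typographically abbreviated form in the statement. Once $C\le (1+8/N)/(4\mu)$ is established, the claimed $\delta(\epsilon)$ and its inverse follow by direct substitution.
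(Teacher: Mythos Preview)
Your proof is correct and follows essentially the same route as the paper's: express $\esf_n$ as a sum of independent bounded random variables (one per run, per setting), compute the Hoeffding constant $C$ using the ranges $1$ for $\mathcal{D}_n$ and $2$ for $\mathcal{M}_k^{\otimes n}$, and bound it via $\norm{\vec a}_2^2\le 2/N$ from Corollary~\ref{cor:coefficients} to get $C\le (1+8/N)/(4\mu)$. The only blemish is a notational collision --- you use $m_0$ both for the number of $\sigma_z$ runs and for the number of $\mathcal{M}_0$ runs --- which the paper avoids by indexing the diagonal setting as $M_N$; otherwise the arguments are identical.
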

    \begin{proof}
        Following the notation in Appendix~A, the fidelity reads
        \begin{align}
            F_n = \sum_{k=0}^{N-1} \frac{a_k}{2} \langle \mathcal{M}_k^{\otimes n} \rangle + \frac{1}{2} \langle \mathcal{D}_n \rangle =: \sum_{k=0}^{N} \frac{a_k}{2} \langle M_k \rangle, 
        \end{align}
        for $M_k = \mathcal{M}_k^{\otimes n}$ for $k=0,\dots, N-1$, $M_N = \mathcal{D}_n$ and $a_N=1$.
        Note that each measurement $M_k$ has only two possible measurement results:
        the measurements $\mathcal{M}_k^{\otimes n}$ can only result in either $+1$ or $-1$, and the measurement $\mathcal{D}_n$ only in either $0$ or $1$.
        Denoting the number of times each measurement $M_i$ is performed by $m_i$, and the result of the $j$-th measurement of $M_i$ by $A_{ij}$, the estimate of the fidelity is then
        \begin{align}
            \esf_n = \sum_{i=0}^N \frac{a_i}{2 m_i} \sum_{j=1}^{m_i} A_{ij}.
        \end{align}
        Note that this is a sum of the independent random variables $\frac{a_i}{2 m_i} A_{ij}$, whose expectation value yields the true fidelity,
        \begin{align}
            \langle \esf_n \rangle =  \sum_{i=0}^N \frac{a_i}{2 m_i} \sum_{j=1}^{m_i} \langle A_{ij} \rangle
            = \sum_{i=0}^N \frac{a_i}{2 m_i} \sum_{j=1}^{m_i} \langle M_i \rangle
            = \sum_{i=0}^N \frac{a_i}{2} \langle M_i \rangle = F_n.
        \end{align}
        The random variables $A_{ij}$ themselves can only take two different values;
        for $i=N$, it holds
        \begin{align}
            \frac{a_N}{2 m_N} A_{Nj} \in \left\lbrace \frac{a_N}{2 m_N}\times 0, \frac{a_N}{2 m_N}\times 1 \right\rbrace = \left\lbrace 0, \frac{1}{2 m_N} \right\rbrace 
        \end{align}
        and, for $i=0,\dots, N-1$,
        \begin{align}
            \frac{a_i}{2 m_i} A_{ij} \in \left\lbrace \frac{a_i}{2 m_i}\times (-1), \frac{a_i}{2 m_i}\times 1 \right\rbrace.
        \end{align}
        Using Hoeffding's inequality, we therefore arrive at
        \begin{align}
            \Pr[ \left( \esf_n - F_n \right) > \epsilon ] = \Pr( \esf_n - \langle \esf_n \rangle > \epsilon ) \leq \exp{ - \frac{2 \epsilon^2 }{C} }
        \end{align}
        with
        \begin{align}
            C &= \sum_{i=0}^{N-1} \sum_{j=1}^{m_i} \left( \frac{a_i}{2 m_i} - \frac{- a_i}{2 m_i} \right)^2 + \sum_{j=1}^{m_N} \left( \frac{1}{2 m_N} \right)^2 \\
            &= \sum_{i=0}^{N-1} \frac{a_i^2}{m_i} + \frac{1}{4 m_N} \\
            &\leq \frac{1}{4 \mu} \left( 4 \norm{\vec{a}}_2^2 + 1 \right) \\
            &\leq \frac{1}{4 \mu} \left( 8/N + 1 \right),
        \end{align}
        which proves the statement.
    \end{proof}
    The last inequality follows directly from Corollary \ref{cor:coefficients} and explains the need to minimize the norm of the coefficient vector $\vec{a}$.
    
    In Appendix~C, we want to test hypotheses which compare fidelities on different subsets.
    Let us for now denote two of these different fidelities by $F_1$ and $F_2$.
    Then, we can bound in exactly the same way as described above the probability by 
    \begin{align}\label{eq:estFidelity}
        \Pr[(\esf_1 - \esf_2 ) - (F_1 - F_2) > \epsilon] < \exp{-\frac{2 \epsilon^2 \mu}{1+8/N}}.
    \end{align} 
    Intuitively, the factor of $4$ in the constant $C$ appears because the length of the intervals in which the summed random variables assume their values approximately doubles. 
    Mathematically, the random variables $\frac{a^{(1)}_i}{2 m_i} A^{(1)}_{ij} - \frac{a^{(2)}_i}{2 m_i} A^{(2)}_{ij}$ arising in the calculation of the fidelity can now take four values, which are bounded by sums of the different coefficients $a^{(k)}_i$.
    Using the subadditivity of the norm $\norm{\cdot}_2$ in the calculation of the constant $C$ yields then the expression given above.
    
\section{Appendix C: Hypothesis testing}
    
    Recall that our goal is to distinguish between different topologies of distributed states. Using the calculated fidelities for subsets of $n$ of the $N$ qubits, we develop here a hypothesis test to decide which layout describes the measured data best. We do that by first formulating the different topologies as partitions of the set of all qubits. Then, we derive a hypothesis test for different partitions and calculate the $p$-value of each hypothesis using the results from Appendix~B.
    
    We start by noticing that the different layouts of a network can be seen as different partitions of the set $\{1, \dots, N\}$.
    We recall that a partition $P$ of $\{1, \dots, N\}$ is a set of subsets $\{ I_1,\dots,I_k \}$ of $\{1,\dots, N\}$ such that $I_i\neq \emptyset$ and $I_i \cap I_j = \emptyset$ for all $i,j$ and the union of all subsets covers again $\{1,\dots, N\}$,
    \begin{align}
        \bigcup_{i=1}^k I_i = \{1,\dots, N\} .
    \end{align}
    For instance, the configurations (a) and (b) from Fig.~\ref{fig:problemconf} would correspond to the partitions $P_a = \{\{1,2,3,4\},\{5,6\},\{7,8\}\}$ and $P_b = \{\{1,2,3\}, \{4,5\},\{6,7,8\}\}$, respectively.
    
    As seen in the section before, all the fidelities $F_n = \Tr{\varrho \ket{\GHZ_n}\bra{\GHZ_n} \otimes \id_{N-n} }$ for $n\leq N$ can be computed from only $N+1$ measurements.
    We now refine the notation to keep track on which $n$ parties the fidelity is calculated.
    We denote the $n$-party \GHZ-state on the $n$ qubits $I \subseteq \{1,\dots, N\}$, $|I|=n$, by $\ket{\GHZ_I}$ and the respective fidelity by $F_I = \Tr{\varrho \ket{\GHZ_I}\bra{\GHZ_I} \otimes \id_{\overline{I}}}$.
    The overline $\overline{I}$ denotes the complement of $I$ in $\{1,\dots, N\}$. 
    
    Now, let $\mathcal{P}(\{1,\dots, N\})$ be the set of all partitions of $\{1,\dots, N\}$. 
    Physically, we interpret a given partition $P = \{ I_1,\dots,I_k \}$ as the subsets of parties $I_i$ on which a \GHZ-state was distributed.
    This means we expect the global state of the network to be
    \begin{align}
        \bigotimes_{i=1}^k \ket{\GHZ_{I_i}}\bra{\GHZ_{I_i}}  = \bigotimes_{I\in P} \ket{\GHZ_I}\bra{\GHZ_I} .
    \end{align}
    In our case, we have $N=6$ photons which can be entangled in four different ways: 
    \begin{subequations}
        \begin{align}
            \ket{\Psi_1} =& \ket{GHZ_6} , \label{eq:Psi_1}\\
            \ket{\Psi_2} =& \ket{GHZ_4} \otimes \ket{GHZ_2},\\
            \ket{\Psi_3} =& \ket{GHZ_2} \otimes \ket{GHZ_4}, \quad \mathrm{and}\\
            \ket{\Psi_4} =& \ket{GHZ_2} \otimes \ket{GHZ_2} \otimes \ket{GHZ_2}.
        \end{align}
    \end{subequations}
    
    \noindent These four cases correspond to the four partitions 
    \begin{subequations}\label{eq:partitions}
        \begin{align}
            P_1 =& \{ \{ 1,2,3,4,5,6 \} \} ,\\
            P_2 =& \{ \{ 1,2,3,4 \} , \{ 5,6 \} \}  ,\\
            P_3 =& \{ \{ 1,2 \} , \{ 3,4,5,6 \} \} , \quad \mathrm{and}\\
            P_4 =& \{ \{ 1,2 \} , \{ 3,4 \} , \{ 5,6 \} \},
        \end{align}
    \end{subequations}
    respectively.
    To discriminate these different possible configurations from the measured data, we now develop a hypothesis test in which the hypotheses exclude each other, ensuring that only one hypothesis can be accepted. 
    
    \begin{lemma}\label{lem:hypotheses}
        Let $P_1, \dots, P_K \in \mathcal{P}(\{1,\dots, N\})$ be $K$ different partitions such that for every pair of partitions $P_i$ and $P_j$, there exists at least one subset in each partition $\tilde{I}\in P_i$ and $\tilde{J} \in P_j$ with $\tilde{I} \subsetneq \tilde{J}$ or $\tilde{I} \supsetneq \tilde{J}$.
        Then, the $K+1$ hypotheses $H_1, \dots, H_K$ and $H_\emptyset$ given by
        \begin{align}
            H_j \ &: \ \forall I\in P_j \ : \ F_I - \max_{G\supset I} F_G > \frac{1}{2} \quad \forall j\in \{1,\dots,K \} \label{eq:H_i}, \\
            H_\emptyset \ &: \ \mathrm{otherwise}
        \end{align}
        are pairwise exclusive to each other, i.e. only one of them can be accepted.
    \end{lemma}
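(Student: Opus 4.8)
The statement is purely about the logical structure of the conditions \eqref{eq:H_i}, so the plan is combinatorial: I would show directly that for any $i\neq j$ the inequality systems defining $H_i$ and $H_j$ are mutually incompatible, using only the elementary fact that every GHZ fidelity satisfies $0\le F_G\le 1$. Exclusivity of $H_\emptyset$ with each $H_i$ is then automatic, since $H_\emptyset$ is by definition the event that none of $H_1,\dots,H_K$ holds, so it is disjoint from all of them by construction.

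Concretely, fix $i\neq j$ and assume for contradiction that $H_i$ and $H_j$ both hold. By the assumption of the lemma there are blocks $\tilde I\in P_i$ and $\tilde J\in P_j$ with $\tilde I\subsetneq\tilde J$ or $\tilde J\subsetneq\tilde I$; the two cases are interchanged by swapping the roles of $i$ and $j$, so assume $\tilde I\subsetneq\tilde J$. This already forces $\tilde I\neq\{1,\dots,N\}$, so the maximum in the $\tilde I$-condition of $H_i$ runs over a non-empty family of strict supersets of $\tilde I$, a family that contains $\tilde J$. Hence $F_{\tilde J}\le\max_{G\supset\tilde I}F_G$, and the $\tilde I$-condition of $H_i$ gives
\begin{align}
F_{\tilde I}-F_{\tilde J}\ \ge\ F_{\tilde I}-\max_{G\supset\tilde I}F_G\ >\ \frac{1}{2};
\end{align}
combined with $F_{\tilde I}\le 1$ this yields $F_{\tilde J}<1/2$. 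On the other hand, applying the $\tilde J$-condition of $H_j$ and using $F_G\ge 0$ for the (possibly empty) maximum appearing there, one gets $F_{\tilde J}-\max_{G\supset\tilde J}F_G>1/2$ and hence $F_{\tilde J}>1/2$, a contradiction. Therefore at most one of $H_1,\dots,H_K$ can be accepted, and together with the defining disjointness of $H_\emptyset$ this proves pairwise exclusivity; in fact exactly one of the $K+1$ hypotheses is always true.

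The argument is a single short contradiction, so there is no real ``hard part''; the only care needed is in the bookkeeping around the maximum. One must (i) justify the ``without loss of generality'' step, which is valid because the nesting condition in the hypothesis is symmetric in $P_i$ and $P_j$; (ii) confirm that $\tilde J$ really is admitted in the family over which $\max_{G\supset\tilde I}F_G$ is taken — true under any reasonable reading, since $\tilde J$ is a block of one of the partitions being tested and strictly contains $\tilde I$; and (iii) fix the convention for the maximum over an empty index set, which is only encountered when $\tilde J=\{1,\dots,N\}$ and for which any choice consistent with $\max\ge 0$ makes the argument go through. No quantum-information input beyond $0\le F_G\le 1$ is used, so the proof is entirely combinatorial once those bounds are granted.
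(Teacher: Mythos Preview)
Your proof is correct and follows essentially the same contradiction argument as the paper: both use the nested blocks $\tilde I\subsetneq\tilde J$ to combine the $\tilde I$-condition of $H_i$ with the $\tilde J$-condition of $H_j$ and run into the bound $0\le F\le 1$. The paper chains the two inequalities as $F_{\tilde I}>\tfrac12+F_{\tilde J}>\tfrac12+\tfrac12=1$, whereas you split it into $F_{\tilde J}<\tfrac12$ and $F_{\tilde J}>\tfrac12$, but this is only a cosmetic reordering of the same idea.
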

    \begin{proof}
        Clearly, the different hypotheses $H_i$ and the hypothesis $H_\emptyset$ are mutually exclusive by definition.
        So, it is only left to show that any two hypotheses $H_i$ and $H_j$, with $i\neq j$, exclude each other.
        From the assumptions, there exist two sets $\tilde{I} \in P_i$ and $\tilde{J} \in P_j$ which fulfill without loss of generality $\tilde{I} \subsetneq \tilde{J}$.
        Let us now assume that both hypotheses could be accepted at the same time.
        It follows that
        \begin{align}
            F_{\tilde{I}} &> \frac{1}{2} + \max_{G\supset \tilde{I}} F_G \geq  \frac{1}{2} +  F_{\tilde{J}} \quad \mathrm{and} \\
            F_{\tilde{J}} &> \frac{1}{2} + \max_{G\supset \tilde{J}} F_G \geq  \frac{1}{2} .
        \end{align}
        But then we have
        \begin{align}
            F_{\tilde{I}} > \frac{1}{2} +  F_{\tilde{J}} > \frac{1}{2} + \frac{1}{2} = 1 ,
        \end{align}
        which contradicts the fact that fidelities can be at most one, i.e. $F_{\tilde{I}} \leq 1$, and completes the proof.
    \end{proof}
    
    The different partitions of Eq.~\eqref{eq:partitions} obviously fulfill the condition in Lemma \ref{lem:hypotheses}.
    So, in our case, we have the five following different hypotheses:
    \begin{subequations} \label{eq:thehyps}
        \begin{align}
            H_1 \ &: \  F_{123456} > \nicefrac{1}{2} ; \\
            H_2 \ &: \  \begin{cases}
                h_{21} \ &: \ F_{1234} - F_{123456} > \nicefrac{1}{2} \quad\mathrm{and}\\
                h_{22} \ &: \ F_{56} - \max\{ F_{3456}, F_{123456} \} > \nicefrac{1}{2} ; \\
            \end{cases} \\
            H_3 \ &: \  \begin{cases}
                h_{31} \ &: \ F_{12} - \max\{ F_{1234}, F_{123456} \} > \nicefrac{1}{2} \quad\mathrm{and}\\
                h_{32} \ &: \ F_{3456} - F_{123456} > \nicefrac{1}{2} ;
            \end{cases} \\ 
            H_4 \ &: \  \begin{cases}
                h_{41} \ &: \ F_{12} - \max\{ F_{1234}, F_{123456} \} > \nicefrac{1}{2} , \\
                h_{42} \ &: \ F_{34} - \max\{ F_{1234}, F_{3456}, F_{123456} \} > \nicefrac{1}{2} \quad\mathrm{and}\\
                h_{43} \ &: \ F_{56} - \max\{ F_{3456}, F_{123456} \} > \nicefrac{1}{2},  \\
            \end{cases} \\
             \quad \text{and} \notag  \\
            H_\emptyset \ &: \ \text{otherwise},
        \end{align}
    \end{subequations}
    respectively. Note that, for the sake of simplicity, we wrote, e.g., $12$ for the set $\{ 1,2\}$. 
    
    So, if we have the true fidelities $F_I$ on every one of these specific subsets, we can unambiguously decide which hypothesis or configuration is true and reject the other possibilities.
    However, as we only have access to finite statistics, and therefore to estimates of the fidelities, we now have to calculate the $p$-values of every single hypothesis.
    The $p$-value describes the probability to get a certain experimental result, in our case the calculated estimate of the fidelity, given that a hypothesis is true. 
    
    Since the hypotheses $H_j$ of Eq.~\eqref{eq:H_i} are all of the same form, we concentrate on a single one of these, consisting of $m$ terms of the type $F_I - \max_{J \supset I} F_J > \frac{1}{2}$.
    For the sake of readability, we neglect the index $I$ in the following calculations and denote the hypothesis by $H$, consisting of the sub-hypotheses $h_i \ : \ F^{(i)} - \max_{J} F^{(i)}_J > \frac{1}{2}$ ($i=1,\dots,k$).
    Additionally, we introduce the shorthand notation $X^{(i)}_J := F^{(i)} - F^{(i)}_J$, or equivalently $\min_{J} X^{(i)}_J := F^{(i)} - \max_{J} F^{(i)}_J$.
    With this notation, the $p$-value of the hypothesis for the minimum of the estimates
    \begin{align}
        d_i := \min_{J} \widehat{X^{(i)}_J} ,
    \end{align} 
    which are calculated from the experimentally observed data, reads as
    \begin{align}\label{eq:pvalue}
        p = \Pr[ \min_{J} \widehat{X^{(i)}_J} \leq d_i \ \forall i \in \{ 1,\dots,k \} \mid H ] .
    \end{align}
    
    First, we prove a short and later helpful Lemma.
    \begin{lemma}\label{lem:probMin}
    With the notation introduced above and the fidelities defined as in Appendix~A and B, it holds
        \begin{align}\label{eq:probMin}
        \Pr[ \left( \min_{J} \widehat{X^{(i)}_J} - \min_{J} X^{(i)}_J \right) > \epsilon ] \leq \exp{-\frac{2 \epsilon^2 \mu}{1+8/N}} .
     \end{align}
    \end{lemma}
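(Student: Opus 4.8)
The plan is to reduce this statement about a minimum of estimators to the single-difference Hoeffding bound already established in Eq.~\eqref{eq:estFidelity}. First I would fix the index $i$ and introduce $J^\star$, an index attaining the minimum of the \emph{true} quantities, $X^{(i)}_{J^\star} = \min_{J} X^{(i)}_J$. The reason for selecting the minimizer of the deterministic (rather than the estimated) values is that then $\min_{J} \widehat{X^{(i)}_J} \le \widehat{X^{(i)}_{J^\star}}$ holds trivially, while at the same time $\min_{J} X^{(i)}_J = X^{(i)}_{J^\star}$ exactly.

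Subtracting these two relations, I obtain
\begin{align}
\min_{J} \widehat{X^{(i)}_J} - \min_{J} X^{(i)}_J \;\le\; \widehat{X^{(i)}_{J^\star}} - X^{(i)}_{J^\star},
\end{align}
so the event whose probability appears on the left-hand side of \eqref{eq:probMin} is contained in the event $\{\widehat{X^{(i)}_{J^\star}} - X^{(i)}_{J^\star} > \epsilon\}$. By monotonicity of the probability measure, it then suffices to bound the probability of this larger event.

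Finally, $X^{(i)}_{J^\star} = F^{(i)} - F^{(i)}_{J^\star}$ is precisely a difference of two \GHZ-fidelities on subsets of qubits, reconstructed from the common measurement data with the coefficients of Appendix~A, and $\widehat{X^{(i)}_{J^\star}}$ is the matching estimator assembled from the independent single-shot outcomes. This is exactly the situation covered by Eq.~\eqref{eq:estFidelity}, which gives $\Pr[\widehat{X^{(i)}_{J^\star}} - X^{(i)}_{J^\star} > \epsilon] < \exp(-2\epsilon^2\mu/(1+8/N))$; combining this with the inclusion of events established above yields the claimed bound.

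I expect the only delicate point to be the ordering of the two minimizations: $J^\star$ must be pinned down as the argmin of the non-random quantities \emph{before} invoking Hoeffding, since otherwise the selected index is itself a random variable and the tail bound cannot be applied term by term. Once this ordering is in place, no union bound over the candidate supersets $J$ is needed — a single well-chosen comparison term carries the whole argument — so there is no loss in the exponent beyond what already appears in Eq.~\eqref{eq:estFidelity}.
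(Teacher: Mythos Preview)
Your proposal is correct and follows essentially the same route as the paper: both fix $\tilde J$ (your $J^\star$) as the argmin of the true $X^{(i)}_J$, use $\min_J \widehat{X^{(i)}_J}\le \widehat{X^{(i)}_{\tilde J}}$ to obtain the event inclusion, and then invoke Eq.~\eqref{eq:estFidelity}. Your explicit remark that $J^\star$ must be chosen as the deterministic argmin (so that Hoeffding applies to a fixed index without a union bound) is a clarification the paper leaves implicit, but the argument is otherwise identical.
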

    \begin{proof}
        To be able to use the results from Appendix~B, we note that the minimum $\min^{(i)}_{J} X^{(i)}_J =: X^{(i)}_{\tilde{J}}$ is assumed for at least one $\tilde{J}$. Additionally, the probability for the minimum of the estimates $\min_{J} \widehat{X^{(i)}_J}$ being larger than some value is upper bounded by the probability for any of the $\widehat{X^{(i)}_J}$ to be larger than that same value.
        We therefore arrive at
        \begin{align}
            \Pr[ \left( \min_{J} \widehat{X^{(i)}_J} - \min_{J} X^{(i)}_J \right) > \epsilon ] 
            = \Pr[ \left( \min_{J} \widehat{X^{(i)}_J} - X^{(i)}_{\tilde{J}} \right) > \epsilon ]
            \leq \Pr[ \left( \widehat{X^{(i)}_{\tilde{J}}} - X^{(i)}_{\tilde{J}} \right) > \epsilon ]
            \overset{\eqref{eq:estFidelity}}{\leq} \exp{-\frac{2 \epsilon^2 \mu}{1+8/N}} = \delta(\epsilon) ,
        \end{align}
        with $\epsilon = \sqrt{-\frac{1+8/N}{2 \mu} \ln(\delta)}$.
    \end{proof}
    
    Now, since it is not possible to calculate the $p$-value exactly given that it only compares fidelities, we provide an upper bound for the $p$-value.
    
    \begin{lemma}\label{lem:pvalue}
        For the hypothesis $H$ described above, the $p$-value in Eq.~\eqref{eq:pvalue} is bounded by
        \begin{align}
            p \leq \delta(d)
        \end{align}
        where $d$ and $\delta(d)$ are defined by $d = \min \{ d_1,\dots, d_k, 1/2 \}$ and $\delta(d) = \exp{-\frac{2 (1/2-d)^2 \mu}{1+8/N}}$.
    \end{lemma}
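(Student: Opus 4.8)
The plan is to collapse the joint event in Eq.~\eqref{eq:pvalue} to a single scalar large-deviation estimate and then feed it into the Hoeffding bound of Appendix~B. Since $\{\min_J\widehat{X^{(i)}_J}\le d_i\ \forall i\}$ is an intersection of $k$ events, its probability is bounded by that of any single one of them; I would keep the tightest, namely the index $i^{\star}$ with $d_{i^{\star}}=\min_i d_i$, so that
\begin{align}
  p\ \le\ \Pr\big[\,\min_J\widehat{X^{(i^{\star})}_J}\le d_{i^{\star}}\ \big|\ H\,\big].
\end{align}
If $\min\{d_1,\dots,d_k,1/2\}=1/2$, i.e.\ all $d_i\ge 1/2$, then $\delta(d)=1$ and the statement is vacuous; so from here on $d=d_{i^{\star}}<1/2$.

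Second, I would rewrite the surviving event as the underestimation of an expectation value. By the sub-hypothesis $h_{i^{\star}}$ of Eq.~\eqref{eq:H_i}, the \emph{true} quantity obeys $\min_J X^{(i^{\star})}_J=F^{(i^{\star})}-\max_J F^{(i^{\star})}_J>\tfrac{1}{2}$, hence $X^{(i^{\star})}_J>\tfrac{1}{2}$ for every superset $J$ entering the maximum. On the event in question, $\min_J\widehat{X^{(i^{\star})}_J}\le d<\tfrac{1}{2}$, so $\widehat{X^{(i^{\star})}_J}\le d$ for at least one such $J$, and for that $J$ one has
\begin{align}
  X^{(i^{\star})}_J-\widehat{X^{(i^{\star})}_J}\ >\ \tfrac{1}{2}-d .
\end{align}
Thus the event is contained in the event that some difference estimator underestimates its mean by more than $1/2-d$.

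Third, I would bound this last probability by the Appendix~B estimate Eq.~\eqref{eq:estFidelity}, in the minimum/maximum form of Lemma~\ref{lem:probMin}, now in the lower-tail direction where $\widehat{X}$ lies below $X$ --- which holds with the same constant upon applying Hoeffding's inequality to the negated random variables. With $\epsilon=\tfrac{1}{2}-d$ and $m_i\ge\mu$ for all settings this yields $\Pr[\cdot]\le\exp[-2(1/2-d)^2\mu/(1+8/N)]$, i.e.\ $\delta(d)$ in the notation of the statement (the minimal number of runs playing the role of $M$). The main obstacle is precisely this reversal of direction: in Lemma~\ref{lem:probMin} the clean reduction to a single bounded variable relies on the \emph{true} minimiser $\tilde J$ being a fixed index, whereas here the $J$ realising $\widehat{X^{(i^{\star})}_J}\le d$ is data-dependent, so one formally passes through a union bound over the finitely many supersets $J$ appearing in the relevant maximum. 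For the concrete hypotheses of Eq.~\eqref{eq:thehyps} this is a prefactor at most $3$, negligible against the exponential suppression, and can be either carried explicitly or regarded as absorbed into the stated bound; what remains is the routine substitution $\epsilon=1/2-d$.
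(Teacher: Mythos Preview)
Your approach is essentially the same as the paper's: collapse the intersection of the $k$ events to a single sub-hypothesis $h_{i}$, use that under $h_{i}$ the true quantity $\min_J X^{(i)}_J$ exceeds $1/2$, and then invoke the Hoeffding-type estimate of Appendix~B with $\epsilon=\tfrac12-d_i$. The paper packages the conditioning on $H$ slightly differently, writing $p\le\max_{\varrho\in R}\Pr[\cdot\mid\varrho]$ for $R=\{\varrho:h_i\text{ true }\forall i\}$ and then enlarging $R$ to $R_i=\{\varrho:h_i\text{ true}\}$ before taking $\min_i$; you pick the optimal $i^\star$ directly. Both routes arrive at the same single-event bound.

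The one substantive difference is the point you flag as ``the main obstacle''. The paper, at the step
\[
\max_{\varrho\in R_i}\Pr\!\big[\min_J\widehat{X^{(i)}_J}\le d_i\,\big|\,\varrho\big]\ \le\ \Pr\!\big[\min_J X^{(i)}_J-\min_J\widehat{X^{(i)}_J}\ge \tfrac12-d_i\big]\ \le\ \delta(d_i),
\]
simply cites Eq.~\eqref{eq:probMin} for the last inequality. But Lemma~\ref{lem:probMin} as stated bounds the \emph{upper} tail of $\min_J\widehat{X}_J-\min_J X_J$, and its proof exploits that the true minimiser $\tilde J$ is deterministic; the reversed inequality needs the empirical minimiser, which is data-dependent, so the argument does not symmetrise cleanly. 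You are right that a union bound over the finitely many supersets $J$ closes this, at the cost of a small constant prefactor that the paper's stated bound does not carry. In that sense your treatment is more careful than the paper's own proof, which glosses over exactly this point.
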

    \begin{proof}
        We use the fact that the random variables $\min_J X_J^{(i)}$ and their estimates are resulting from quantum states $\varrho$.
        We define regions in the state space where the different hypotheses are fulfilled,
        \begin{align}
            R_i :=& \{ \varrho \mid h_i \ : \ \min_J X_J^{(i)} > \nicefrac{1}{2} \ \mathrm{is\ true}\}, \quad i\in \{ 1,\dots, k \} \quad \mathrm{and} \\
            R :=& \{ \varrho \mid h_i \ : \ \min_J X_J^{(i)} > \nicefrac{1}{2} \ \mathrm{is\ true} \ \forall i \in \{ 1,\dots, k \} \}.
        \end{align}
        Obviously, it holds $R \subseteq R_i$.
        We now can calculate a bound on the $p$-value by maximizing over the possible quantum states which lie in $R$:
        \begin{align}
            p &= \Pr[ \min_{J} \widehat{X^{(j)}_J} \leq d_j \ \forall j \mid H ]
            \leq \max_{\varrho \in R} \Pr[ \min_{J} \widehat{X^{(j)}_J} \leq d_j \ \forall j \mid \varrho ]
            \leq \max_{\varrho \in R_i} \Pr[ \min_{J} \widehat{X^{(j)}_J} \leq d_j \ \forall j \mid \varrho ]  \quad \forall i .
        \end{align}
        Since this holds for every $R_i$, it also holds for the minimum over $i$.
        Additionally, the joint probability for the different events $\min_{J} \widehat{X^{(j)}_J} \leq d_j$ is bounded by the probability of one of these events.
        We therefore arrive at
        \begin{align}
            p &\leq \min_i \max_{\varrho \in R_i} \Pr[ \min_{J} \widehat{X^{(j)}_J} \leq d_j \ \forall j \mid \varrho ]
            \leq \min_i \max_{\varrho \in R_i} \Pr[ \min_{J} \widehat{X^{(i)}_J} \leq d_i \mid \varrho ] .
        \end{align}
        However, the probability in the last line can be bounded using Lemma \ref{lem:probMin}.
        Indeed, for the case $d_i < 1/2$, it holds
        \begin{align}
            \max_{\varrho \in R_i} \Pr[ \min_{J} \widehat{X^{(i)}_J} \leq d_i \mid \varrho ] 
            &\leq \Pr[ \left( \min_J X_J^{(i)} - \min_{J} \widehat{X^{(i)}_J} \right) \geq 1/2 - d_i ]  \\
            &\overset{\eqref{eq:probMin}}{\leq} \exp{-\frac{2 (1/2 - d_i)^2 \mu}{1+8/N}} =: \delta(d_i). 
        \end{align}
        For the case $d_i \geq 1/2$, however, we only obtain the trivial bound 
        \begin{align}
            \max_{\varrho \in R_i} \Pr[ \min_{J} \widehat{X^{(i)}_J} \leq d_i \mid \varrho ] 
            \leq 1 = \delta(1/2).
        \end{align}
        Since $\delta(d_i)$ decreases for decreasing $d_i < 1/2$, we get
        \begin{align}
            p &\leq \min_i \max_{\varrho \in R_i} \Pr[ \min_{J} \widehat{X^{(i)}_J} \leq d_i \mid \varrho ]
            \leq \delta( \min \{ d_1,\dots, d_k, 1/2  \} ),
        \end{align}
        which concludes the proof.
    \end{proof}
    The intuition behind this proof is that the probability for the observed data given the hypothesis $H$ is upper bounded by the probability of the least probable event.
    Since the hypothesis $H$ consists of statements of the type $d_i > 1/2$ the least probable result $d_i$ is the one farthest away from $1/2$, namely $\min_i d_{i} < 1/2$. 

    The only hypothesis which cannot be described in the above manner is the null hypothesis $H_\emptyset$ in Lemma \ref{lem:hypotheses}.
    However, this hypothesis can be written as
    \begin{align}
        H_\emptyset \ : \ \forall j \in \{ 1,\dots, K\} \  \exists I^{(j)} \in P_j \ : \ \min_G (F_{I^{(j)}} - F_G) \leq 1/2.
    \end{align}
    This means that for every other hypothesis $H_j$ there exists at least one sub-hypothesis pertaining to the set $I^{(j)}$ which is not fulfilled.
    For a fixed combination of the sets $I^{(j)}$, we can argue in the same way as in Lemma \ref{lem:pvalue}:
    the probability for the calculated estimate for one sub-hypothesis is bounded by
    \begin{align}
        \Pr[ \min_{G} ( \widehat{F_{I^{(j)}}} - \widehat{F_G}) \geq d_{I^{(j)}} \mid \min_G (F_{I^{(j)}} - F_G) \leq 1/2 ] \leq 
        \begin{cases}
            \begin{rcases}
            \delta(d_{I^{(j)}}) &\text{for } d_{I^{(j)}} >1/2,\\
            1 &\text{else},
            \end{rcases}
        \end{cases} \!
        =: \Phi( d_{I^{(j)}} )
    \end{align}
    with $\delta(d_{I^{(j)}}) = \exp{-\frac{2 (1/2-d_{I^{(j)}})^2 \mu}{1+8/N}}$. The function $\Phi(d)$ denotes the map which assigns to $d$ the value $\delta(d)$, if $d > 1/2$, and $1$ otherwise. Thus, we have
    \begin{align}
        \Pr[ \min_{G} (\widehat{F_{I^{(j)}}} - \widehat{F_G}) \geq d_{I^{(j)}} \ \forall j \mid H_\emptyset ] \leq \min_j \Phi(d_{I^{(j)}}) .
    \end{align}
    The $p$-value itself is then bounded by the maximum over all different combinations of the sets $I^{(j)}$ 
    \begin{align}
        p &\leq \max_{\{I^{(j)}\}} \Pr[ \min_{G} (\widehat{F_{I^{(j)}}} - \widehat{F_G}) \geq d_{I^{(j)}} \ \forall j \mid H_{\emptyset} ] \\
        &\leq \max_{\{I^{(j)}\}} \min_j \Phi( d_{I^{(j)}} ) \\
        &= \min_{j \in \{1,\dots, K\} } \max_{i \in \{1,\dots, k\} } \Phi( d^{(j)}_i ),
    \end{align}
    where $d^{(j)}_i$ are the calculated estimates for the sub-hypothesis $h_i$ in the hypothesis $H_j$.

    Intuitively, this can again be understood in the following way: The probability for the observed data given the hypothesis $H_\emptyset$ is upper bounded by the probability of the least probable event. The least probable event given that $H_\emptyset$ is true, is the event where for a hypothesis $H_j$ all statements of the form $d^{(j)}_i > 1/2$ are fulfilled, since $H_\emptyset$ states that at least for one $i$ it should hold $d^{(j)}_i \leq 1/2$. Since the exact $d^{(j)}_i$ one has to consider here is unknown, one can only upper bound the probability by the worst case $\Phi(d^{(j)}_i)$. 

    In Table~\ref{tab:pvalue} the $p$-values obtained from the experimental data for the hypotheses described in Eq.~\ref{eq:hypothesis} are presented. One can clearly see that for every data set the $p$-value of exactly one hypothesis is trivially upper bounded by $1$ while the $p$-values for all other hypotheses are at most bounded by $9.4765\times 10^{-8}$, suggesting to accept the trivially bounded hypothesis.

    At last, we want to consider the scaling of the above described hypothesis testing scheme. First, we find that for $N$ qubits one needs to perform $N+1$ different measurement settings to calculate all GHZ fidelities, so the number of measurement settings increases linearly in $N$. Additionally, the bounds on the $p$-values depend on $N$, the minimal number of measurement repetitions $\mu$ and the fidelity estimates $d_i$ (see Lemma 7), but become independent from the number of qubits considered for large $N$. Assuming that the experimental fidelities and therefore the estimates $d_i$ are comparable, and fixing the number measurement repetitions to $\mu$ for all measurements, we find that the number of required measurement events scales like $(N+1)\mu$. Another point one has to take into account is the number of hypotheses one wants to consider in the hypothesis test, as the number of potentially certifiable hypotheses scales exponentially in the number of qubits $N$. Since one is mostly interested in only a few practically relevant partitions this should not be a problem for at least up to dozens of qubits.

\begin{table}[] 
    \caption{%
        Upper bound on the $p$-value of the different hypotheses for the four different states.
        We recall that the hypotheses are explicitly given in Eq.~\eqref{eq:thehyps}, and that $H_1$ corresponds to $F_{123456}$ being large, $H_2$ to $F_{1234}$ and $F_{56}$ being large, $H_3$ to $F_{12}$ and $F_{3456}$ being large, $H_4$ to $F_{12}$, $F_{34}$ and $F_{56}$ being large and, finally, $H_\emptyset$ is the null hypothesis.
    }\label{tab:pvalue}
    \centering
    \begin{tabular}{c|c|c|c|c|c}
         &  $H_1$  & $H_2$ & $H_3$ & $H_4$ & $H_\emptyset$ \\ \hline \hline
        Dataset 1 & $1$ & $2.3724\times 10^{-139}$ & $9.2153 \times 10^{-146}$ & $3.0256\times 10^{-116}$ & $9.4765 \times 10^{-8}$\\
        Dataset 2 & $8.3274\times 10^{-92}$ & $1$ & $1.6231 \times 10^{-258}$ & $1.6231\times 10^{-258}$ & $1.4347 \times 10^{-13}$\\
        Dataset 3 & $6.1789\times 10^{-93}$ & $6.8294\times 10^{-259}$ & $1$ & $8.8629\times 10^{-272}$ & $1.3010 \times 10^{-23}$\\
        Dataset 4 & $1.1810\times 10^{-32}$ & $3.7960\times 10^{-174}$ & $4.9169\times 10^{-188}$ & $1$ & $2.3295 \times 10^{-22}$
        \\\hline
    \end{tabular}
\end{table}

\section{Appendix D: Additional data and experimental details}

    An in-depth description and characterization of our experiment can be found in 
    Ref. \cite{Meyer2022}. A dispersion-engineered, spectrally decorrelated guided-wave 
    parametric down-conversion in a periodically poled potassium titanyl waveguide is 
    pumped with ultrafast pump pulses with a central wavelength of $\lambda_p=775\,$nm.
    The source is arranged in a Sagnac configuration to generate polarization-entangled Bell 
    states with a central wavelength of $\lambda_\mathrm{Bell}=1550\,$nm \cite{meyer2018}.
    A set of tomographic wave plates (half-wave plate, quarter-wave plate) and a polarizing beam splitter are used in one arm of the source to implement a polarization-resolving heralding measurement.
    We use superconducting nanowire single-photon detectors with an efficiency of more than 70\% and recovery time of $11\,$ns for photon detection.
    These allow to operate the source at the full laser repetition rate of $76\,$MHz, corresponding to a time difference between successive laser pulses of $13.6\,$ns.
    A successful heralding detection event is fed forward to a high-speed electro-optic switch inside the quantum memory.
    The switch is, effectively, a Pockels cell that rotates the polarization of an incoming light pulse when a high voltage is applied.
    It has a response time (rise and/or fall time) of less than $5\,$ns.
    Depending on when the switching occurs, we can realize the \textsc{swap} and \textsc{interfere} operations described in the main text.
    Again, more details are found in Ref. \cite{Meyer2022}.

    The quantum memory itself is an all-optical, free-space storage loop with a memory bandwidth beyond $1\,$THz, a memory efficiency of 91\%, and a lifetime of $131\,$ns.
    Its operation state is controlled by a field-programmable gate array (FPGA), which converts the heralding detection events into switching time lists for the Pockels cell as well as a time gate for the photon detection behind the memory.
    The latency of the feed-forward is compensated for by sending the photons through a $300\,$m long single-mode fiber.
    After retrieval from the memory, photons are sent to another polarization-resolving detection stage.
    A successful datum is registered when six photons are detected in total, three herald photons in different time bins together with three photons from the memory in the corresponding time bins.    Hence, a complete experiment consists of the following steps:
    Firstly, define the network topology and the corresponding switching sequence;
    secondly, perform a test measurement of the H/V populations to assess system performance;
    thirdly, measure all relevant observables by automatically setting the corresponding wave-plate angles and collecting data until a predetermined number of successful events (in our case around thousand) has been measured.

    Figure \ref{fig:sppdata} exemplarily shows the click-correlation matrices for the H/V populations of the four network topologies considered in the main text (top row) and the $\langle \mathcal{M}_3^{\otimes 6} \rangle$ coherence term (lower row).
    Note that the same data sets 
    have been used to calculate the observables as described in the main text. In Table \ref{tab:pvalue} 

\begin{figure*}[!tbp]
    \centering
    \makebox[\textwidth]{\includegraphics[width=.8\paperwidth]{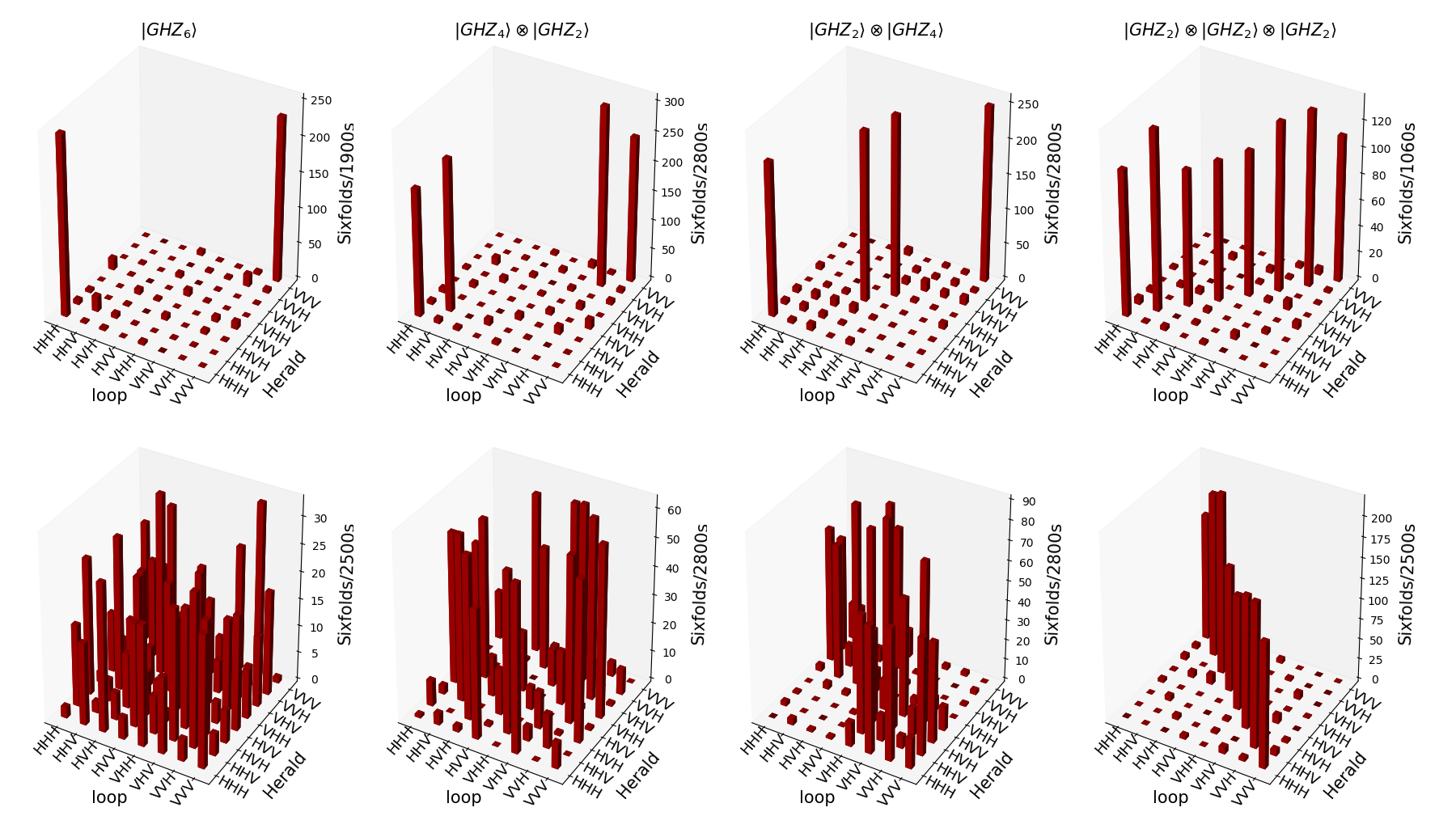}}
    \caption{%
        (Top row) Population of six-fold clicks for the four different topologies (as named in figure) measured in the H/V basis. 
        Direct population analysis shows $\mathcal{D}_N$ values of $(74\pm1.7)\%$, $(83.7\pm1.1)\%$, $(82.3\pm1.1)\%$, and $(92.6\pm0.8)\%$, respectively, for the states from left to right.
        A Poissonian error on the counts is included in the data analysis.
        As expected, fidelities decrease for larger-size GHZ states.
        (Bottom row) Six-fold click probabilities measured in a specific coherence term ($\langle\mathcal{M}_3^{\otimes 6}\rangle$) setting.
    }\label{fig:sppdata}
\end{figure*}

    For the described experimental approach, we expect the hypothesis testing scheme to be feasible for up to $N=8$ or $N=10$ qubits. This results from the following considerations: For the hypothesis test it is crucial to have GHZ fidelities of above $0.5$. Since the initial fidelity of a Bell pair generated by the source has roughly a fidelity of $0.92$, assuming naively that the fidelity of a GHZ state with $N$ qubits is upper bounded by $0.92^{(N/2)}$ leads to fidelities below $0.5$ already for a $18$-qubit GHZ state. However, the fidelity will be mostly far lower than this due to imperfections in the setup and the measurement settings, so the expected upper limit on the number of qubits will be around $N=8$ or $N=10$. Additionally, the generation and detection rates of larger GHZ states decrease rapidly. As described in the Supplemental Material of Ref.~\cite{Meyer2022}, in Section~S4.B, and assuming setup parameters similar to the state of the art publication \cite{zhong2018} we expect generation rates of $(6x10^4, 3x10^3, 3x10^2, 12, 0.6)$ Hz for $(4, 6, 8, 10, 12)$-qubit GHZ states. Since all of the photons must be detected, assuming a total loss of $50\%$ per photon leads to an exponential loss scaling. Combining the generation rate with the loss scaling, we end up with detection rates of $(3750, 46, 1.2, 10^{-2}, 10^{-4})$ Hz for the above states. Therefore, to collect $1000$ valid detection events, one must wait for around $(0.27, 22, 833, 10^{5}, 10^{7})$ seconds. Note that in the hypothesis testing scheme $N+1$ different measurement settings are needed, so if one wants $1000$ detection events per measurement setting this leads to another factor of $N+1$. All in all, this probably could be done for up to $N=10$ photons.

\section{Appendix E: Semi-Device-independent estimation on the fidelity}

    In our considerations before, we have assumed that the performed measurements are well calibrated and that the nodes are trusted. This, however, may not necessarily be the case. Therefore we consider here the device-independent scenario. In a device-independent scenario, there are no assumptions made on the measurements. 

    The key to discussing the device-independent scenario is to consider the Bell operator from the Mermin inequality \cite{Mermin1990}, which reads for three particles as
    \begin{equation}
        \mathcal{B}_3 = 
        X \otimes X \otimes X - 
        X \otimes Y \otimes Y -
        Y \otimes X \otimes Y -
        Y \otimes Y \otimes X. 
    \end{equation}
    Here, $X$ and $Y$ are general dichotomic observables, but not necessarily Pauli matrices.
    For local realistic models, $\mean{\mathcal{B}_3} \leq 2$ has to hold while the GHZ state reaches $\mean{\mathcal{B}_3} = 4$ if the measurements $X = \sigma_x$ and $Y = \sigma_y$ are performed.
    The Mermin inequality can be generalized to more particles;
    it consists of a combination of $X$ and $Y$ measurements with an even number of $Y$s and alternating signs.
    Formally, this can efficiently be written as $\mathcal{B}_N = [(X+iY)^{\otimes N}+(X-iY)^{\otimes N}]/2$, and, with the identification $X = \sigma_x, Y = \sigma_y$, one finds $\mathcal{B}_N = 2^{N-1} \mathcal{A}_N$ (see also Eq.~\eqref{eq:A_n}).

    The essential point is that several results connecting the expectation value $\mean{\mathcal{B}_N}$ with the GHZ-state fidelity are known.
    First, if nothing is assumed about the measurements, and if the value $\mean{\mathcal{B}_N}$ is close to the algebraic maximum $2^N$, a high GHZ fidelity up to some local rotations is certified \cite{Kaniewski2009}.
    Second, as we will show below, if one assumes that $X$ and $Y$ are (potentially misaligned) measurements on qubits, one can directly formulate a lower bound on the GHZ fidelity. This can be extended to other notions of misaligned measurements \cite{Cao2023}. 
    Third, it has been shown that, even in the presence of collaborating dishonest nodes, the global state must be close to a GHZ state if the honest nodes choose $X = \sigma_x$ and $Y = \sigma_y$ as measurements \cite{Pappa2012}.
    Finally, if all parties choose  $X = \sigma_x$ and $Y = \sigma_y$, then $F_N \geq \mean{\mathcal{B}_N} /2^{N-1}$ since for quantum states $\mean{\mathcal{D}_N} \geq \mean{\mathcal{A}_N}$.

    The above comments suggest the following scheme for the device independent scenario.
    All parties measure all combinations of $X$ and $Y$, leading to $2^N$ measurements in total.
    Then, they can evaluate the Bell operator of the Mermin inequality on each subset and characterize the fidelity for all sources with three or more qubits. 
    For the case that the maximal size of the GHZ states is known to be $M<N$, not all $2^N$ measurements need to be performed.
    Rather, a smartly chosen subset of these measurements suffices as the scheme of overlapping tomography allows to evaluate all combinations of $X$ and $Y$ on $M$-particle subsets with an effort increasing only logarithmically in $M$ \cite{cotler2020quantum}. Finally, if all the multipartite sources are identified, the parties can
    use two-qubit Bell inequalities to identify the structure of the bipartite sources. In addition, they may also consider advanced measurement schemes based on continuous stabilizers \cite{Mccutcheon2016}, or the Svetlichny inequality \cite{Murta2023}, which may be more efficient in the presence of dishonest parties.
    
    As discussed above, there are already results connecting the violation of a
    Mermin inequality by some state with its GHZ-state fidelity \cite{Kaniewski2009}. 
    In the following, using the experimentally observed violation of the Mermin inequality $S_N = \expval{\mathcal{B}_N}$, we will give a lower bound of the GHZ-state fidelity under the assumption that one performs (misaligned) measurements $X$ and $Y$ on qubits.

    Firstly, recall that the Bell operator of the Mermin inequality for $N$ qubits is 
    given by
    \begin{align}
        \mathcal{B}_N = [(X+iY)^{\otimes N}+(X-iY)^{\otimes N}]/2.
    \end{align}
    We then specify the observables as
    \begin{align}
        X &= \sigma_x,\\
        Y &= \sin(\theta_k)\sigma_x + \cos(\theta_k)\sigma_y,
    \end{align}
    where $\theta_k$ is the angle denoting the misalignment from a perfect measurement 
    ($X = \sigma_x$ and $Y = \sigma_y$), which can be different for each qubit $k = 1,...,N$.
    Note that we can assume $X = \sigma_x$ since we are only interested in the GHZ-state fidelity modulus local unitaries. For the same reason, it is no further restriction to assume the
    second measurement to be in the $x$-$y$ plane of the Bloch sphere.

    The goal is now to estimate the lower bound on the GHZ-state fidelity, given a violation 
    $S_N$ of the $N$-qubit Mermin inequality.
    For two qubits, this has previously been done using the spectral decomposition of the Clauser, Horne, Shimony, Holt (CHSH) operator \cite{Bardyn2009}. We now extend this method to $N>2$ qubits by using the spectral decomposition of the Mermin operator $\mathcal{B}_N$ and the fact that, for any Bell operator with two observables per qubit, its eigenstates (if not degenerate) are given by the $N$-qubit GHZ states, such that its eigendecomposition is $\mathcal{B}_N = \sum_i \lambda_i \ket{GHZ_i}\bra{GHZ_i}$ \cite{Scarani2001}. 

    The idea is now the following:
    If the violation exceeds a certain value $S_N$, we want to be able to state that the fidelity for some GHZ state $\ket{GHZ_i}$ is at least $F$.
    Thus, we start by fixing a fidelity $F$ and maximize the violation of the Bell inequality.

    Consider the expectation value of the Bell operator for some state $\varrho$,
    \begin{align}
        S_N &= \Tr(\mathcal{B}_N\varrho) = \sum_i \lambda_i \bra{GHZ_i}\varrho \ket{GHZ_i}\\
        &= \sum_i \lambda_i F(\varrho , \ket{GHZ_i})
        \leq \lambda_1 F + \lambda_2 (1-F),
        \label{eq:maxvio}
    \end{align}
    where we fixed the largest fidelity $F \coloneqq \max_i \{F(\varrho , \ket{GHZ_i})\}$ and sorted the eigenvalues $\lambda_i$ in decreasing order.
    
    As mentioned before, the measurements might be misaligned with respect to some angles $\theta_k$ and therefore the eigenvalues $\lambda_i$ are functions of the angles $\theta_k$.
    Since we are interested in the largest violation one can achieve for a fixed fidelity $F$, we have to consider all possible misalignments $\{\theta_k\}$.
    Then, Eq.~\eqref{eq:maxvio} reads
    \begin{align}
        S_N \leq \max_{\{\theta_k\}} [\lambda_1(\{\theta_k\}) F +\lambda_2(\{\theta_k\}) (1-F)].\label{eq:bound}
    \end{align}
    In order to find the maximum of this expression, consider
    \begin{align}
        \mathcal{B}_N^2 = [((X+iY)^2)^{\otimes N} + ((X-iY)^2)^{\otimes N}+((X+iY)(X-iY))^{\otimes N}+((X-iY)(X+iY))^{\otimes N}]/4.
    \end{align}
    Using $\Tr(\sigma_i\sigma_j) = 2\delta_{ij}$, it follows that $\Tr(XY) = 2\sin(\theta_k)$, and further that
    \begin{align}
        \Tr(\mathcal{B}_N^2) &= [\prod_{k=1}^N 4i\sin(\theta_k) + \prod_{k=1}^N (-4i)\sin(\theta_k) + 4^N +4^N]/4 \\
        &= [(4i)^N\prod_{k=1}^N \sin(\theta_k) + (-4i)^N\prod_{k=1}^N \sin(\theta_k) + 2\cdot4^N ]/4. 
    \end{align}
    For an odd number of qubits $N$, this reduces to 
    \begin{align}
        \Tr(\mathcal{B}_N^2) &= [2\cdot4^N ]/4 = 2^{-1}2^{2N} = 2^N2^{N-1}.
    \end{align}
    Following Ref.~\cite{Scarani2001}, where similar formulas were derived, we note that
    \begin{align}
        \sum_{i=1}^{2^N}\lambda_i^2 = \Tr(\mathcal{B}_N^2) = 2^N2^{N-1}.
    \end{align}
    Further, it is known that the eigenvalues of the Mermin operator $\mathcal{B}_N$ appear pairwise with alternating signs; therefore, summing up only the squared positive eigenvalues yields
    \begin{align}
        \sum_{i=1}^{2^{N-1}} (\lambda_i^+)^2 = 2^N2^{N-1}/2 = 2^{2(N-1)}.
    \end{align}
    Thus, the two largest eigenvalues must fulfill
    \begin{align}
        \lambda_1^2+\lambda_2^2 \leq 2^{2(N-1)}.
        \label{eq:lambdaCond}
    \end{align}
    We can use this to parameterize $\lambda_1 = 2^{N-1}\cos(\alpha)$ and $\lambda_2 = 2^{N-1}\sin(\alpha)$, as done in Ref.~\cite{Bardyn2009}.
    Note that $\alpha$ is not directly connected to the misalignment $\theta_k$ anymore, but it simply represents a parameter used to express all possible configurations for $\lambda_1$ and $\lambda_2$ saturating Eq.~\eqref{eq:lambdaCond}. 
    Then, Eq.~\eqref{eq:bound} yields
    \begin{align}
         S_N &\leq \max_{\alpha} [2^{N-1}\cos(\alpha) F + 2^{N-1}\sin(\alpha)(1-F)]\\
         &\leq 2^{N-1} \sqrt{F^2+(1-F)^2},
    \end{align}
    using that $\max_\alpha (a\sin(\alpha)+b\cos(\alpha)) = \sqrt{a^2+b^2}$, and it follows that
    \begin{align}
        F \geq \frac{1}{2} + \frac{1}{\sqrt{2}}\sqrt{\left(\frac{S_N}{2^{N-1}}\right)^2-\frac{1}{2}}
        \label{eq:anaF}
    \end{align}
    for an odd number of qubits.
    Note that this expression is only defined if the violation is large enough, which shows that, in the semi-device-independent scenario, a violation of at least $S_N = 2^{N-1}/\sqrt{2}$ is needed to certify entanglement.
    Furthermore, if the maximal violation $2^{N-1}$ is achieved, the GHZ-state fidelity is $F=1$.

\begin{figure}[tbp]
    \centering
    \includegraphics[width = 0.8\textwidth]{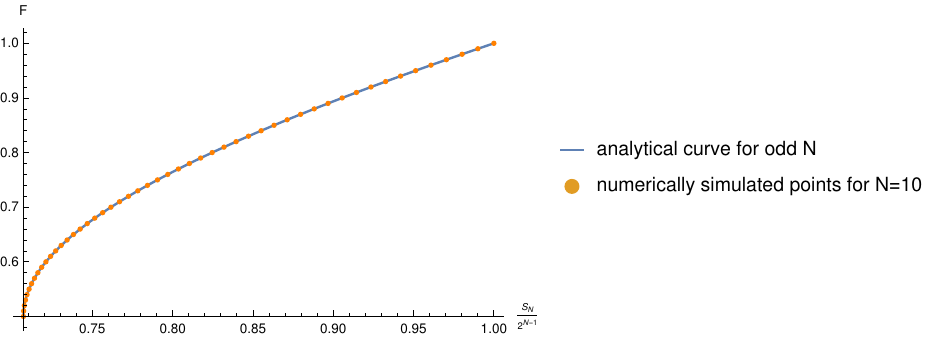}
    \caption{
        The minimum GHZ-state fidelity for a given violation of the $N$-qubit Mermin inequality. 
        We compare the analytically computed curve following Eq.~\eqref{eq:anaF} to the numerically simulated values for 10 qubits.
    }
    \label{fig:violationvsfid}
\end{figure}

    Lastly, it is to mention that this ansatz to find the analytical expression only works for an odd number of qubits.
    However, when numerically maximizing Eq.~\eqref{eq:bound} for up to 10 qubits and scaling the violation with a factor $2^{-(N-1)}$, we obtain the same curve for all $N$, as shown in Fig.~\ref{fig:violationvsfid}.
    Thus, we conjecture that Eq.~\eqref{eq:anaF} is true for all $N$.

    For the experimental implementation of the 
    semi-device-independent scenario one should note one 
    important difference to the device-dependent scheme 
    described in the main text: While in the device-dependent 
    scenario all parties measure the same $N+1$ local 
    measurement settings at the same time, here they would need to 
    measure all $2^N$ possible combinations of the dichotomic 
    observables $X$ and $Y$. For the experimental set-up described 
    in Appendix~D, this makes a significant difference. As can be 
    seen in Figure~\ref{fig:setup}, measuring the same measurement 
    setting on three of the qubits is straightforward as the same 
    detector can be used for the three photons used to trigger the 
    feed-forward signal, as well as for for the three photons stored 
    in the quantum memory. One possibility to measure different observables on these three qubits is to split the beam to two measurement setups (effectively: a POVM, which implements the $X$ and $Y$  measurement at the same time), and using the time information from the detectors to identify which measurement has
    been performed. For the available set-up, however, this would require significant effort in the data processing and additional detectors.

\end{document}